\newtheorem{theorem}{Theorem}
\newtheorem{definition}{Definition}
\title{
Probabilistic Missing Value Imputation \\
for Mixed  Categorical and Ordered Data 
}
\author{
  Yuxuan Zhao
    \\
  Cornell University\\
  \texttt{yz2295@cornell.edu} \\
  \And
  Alex Townsend
  \\
  Cornell University\\
  \texttt{townsend@cornell.edu}
  \And
  Madeleine Udell \\
  Stanford University\\
   \texttt{udell@stanford.edu} \\
}
\begin{document}

\maketitle

\begin{abstract}
Many real-world datasets contain missing entries and mixed data types including categorical and ordered (e.g. continuous and ordinal) variables.
Imputing the missing entries is necessary, since many data analysis pipelines require complete data,
but this is challenging especially for mixed data.
This paper proposes a probabilistic imputation method using an \emph{extended Gaussian copula model}
that supports both single and multiple imputation.
This method models mixed categorical and ordered data using a latent Gaussian distribution.
The unordered characteristics of categorical variables is explicitly modeled using the argmax operator.
This model makes no assumptions on the data marginals nor does it require any hyperparameter.
Experimental results on synthetic and real datasets show that imputation with the extended Gaussian copula outperforms the current state-of-the-art for both categorical and ordered variables in mixed data.
\end{abstract}

\section{Introduction}
Modern datasets from healthcare, the sciences, and the social sciences 
often contain missing entries and mixed data types such as continuous, ordinal, and categorical. 
Social survey datasets, for example, are typically mixed because they include variables like age (continuous), demographic group (categorical), and Likert scales (ordinal) measuring how strongly a respondent agrees with certain stated opinions.
Continuous variables are encoded as real numbers and sometimes called numeric.
We refer to variables that admit a total order (e.g. continuous and ordinal) as \emph{ordered} variables.
In contrast, a categorical variable, also called nominal, can take one of a fixed number of {\em unordered values} such as ``A", ``B", ``AB", or "O" for blood type.  

Most data analysis techniques require a complete dataset, so missing data imputation is an essential preprocessing step. 
It is also often of interest to propagate imputation uncertainty into subsequent analyses through multiple imputation,  which generates several potentially different imputed datasets  \cite{rubin1996multiple}.
An imputation method should ideally use all the collected data---regardless of data type---to impute any missing data entries. 
However,
most imputation approaches, whether explicitly or implicitly, assume that each variable admits a total order and, as a result, cannot impute categorical variables without proper preprocessing.

There is no satisfying, successful, and widely adopted method for imputing categorical variables, especially in 
mixed datasets.
It is tempting to reduce categorical imputation to ordinal imputation using an {\em integer encoding} in which each category is assigned a number; however, this encoding requires choosing an arbitrary ordering for the categories that may affect the results of downstream (e.g., predictive) models.
(This problem is most severe for linear and neural network models, whereas tree-based models are less sensitive \cite{yang2020automl}.) 
Instead, it is more common to use {\em one-hot  encoding} to represent a categorical variable using a binary vector with one entry for each category. 
With this encoding, imputation methods developed for continuous or binary data, such as the popular low rank matrix completion methods (LRMC) \cite{recht2010guaranteed,josse2012handling,udell2019big},
can be jerry-rigged for categorical imputation.
However, LRMC is only proper when the encoded matrix (both ordered and encoded categorical variables) is approximately low rank.
It has been observed that LRMC performs poorly on long skinny (many samples $n$ and few features $p$) datasets because the low rank assumption fails \cite{fan2019online,zhao2020missing}.

Iterative imputation methods
including MICE \citep{buuren2010mice} and missForest \citep{stekhoven2011missforest} can directly operate on the unordered categories.
MICE learns the conditional distribution of each variable (whether ordered or unordered) using all other variables via linear and logistic regression.
However,
the learned conditional models may be incompatible in the sense that the joint distribution cannot exist \cite{buuren2010mice}. 
MissForest uses random forest to predict each variable and yields more accurate single imputations \citep{stekhoven2011missforest}, but cannot provide multiple imputation.
Both methods converge slowly (or sometimes diverge) for large datasets because they train many models on (possibly) ill-conditioned data.

This paper expands on the idea of modeling categorical data and multivariate data interaction  using a latent continuous space \cite{gal2015latent, christoffersen2021asymptotically}.
This approach can explicitly model the categorical distribution, which is critical for providing multiple imputation.
This paper models the latent space as a Gaussian distribution and models each categorical variable as the argmax of a latent Gaussian vector.
This choice is inspired by the Gaussian copula model \cite{hoff2007extending, fan2017high, feng2019high},
which yields high quality imputations for ordered data \citep{zhao2020missing}.
See \citep{zhao2022gcimpute} for a concise review and \citep{zhao2022gaussian} for comprehensive methodology about Gaussian copula imputation.
However, it cannot be used when the data contains categorical variables.
This paper proposes the \textit{extended Gaussian copula} to overcome this limitation by explicitly modeling each data type.
In contrast, other imputation methods typically require bespoke preprocessing for the input data to perform well.
For mixed categorical and ordered data,
the extended Gaussian copula model generates each categorical variable using a latent Gaussian vector and each ordered variable using a latent Gaussian scalar. 
The latent Gaussian correlations capture the dependence structure.

\paragraph{Contributions.}
This paper makes three main contributions to the literature: 
\begin{enumerate}[leftmargin = *]
\item A probabilistic model, the \emph{extended Gaussian copula}, for mixed data including continuous, ordinal, and categorical variables. 
The model is free of hyperparameters and makes no assumptions on the marginal distribution of each data type.

\item A \textit{single imputation} method that empirically provides state-of-the-art accuracy for mixed data and a \textit{multiple imputation} method that can quantify
imputation uncertainty by measures such as 
the category probability for categorical variables
and confidence intervals for continuous variables.

\item A robust and efficient parameter fitting algorithm for the extended Gaussian copula that supports further acceleration through parallelization, mini-batch training, and low rank structure.
\end{enumerate}

\paragraph{Related work.}
Modeling a categorical variable as the argmax of a latent continuous vector is a classical technique used in the multinomial probit model~\cite{bunch1991estimability, mcculloch1994exact} in the context of supervised learning.
Instead, our model is more similar to \cite{christoffersen2021asymptotically}, which proposed one way to accommodate categorical variables in the Gaussian copula. However, the model in \cite{christoffersen2021asymptotically} has many redundant parameters and thus is unidentifiable, while the extended Gaussian copula is identifiable and can match any categorical marginal distribution.

\section{Methodology}
\paragraph{Notation. } For $f: \Rbb^K \to \Rbb$, we define the (possibly set-valued) pre-image of $f$ as $f^{-1}(y):=\{x: f(x)=y\}$. We define $\argmax(\bz)$ of a vector $\bz=(z_1,\ldots,z_p)$ as the index of the maximum entry so that $\bz_{\argmax(\bz)} = \max_{i=1,\ldots,p} z_i$.
We use $\mathcal{N}(\bmu, \Sigma)$ to denote the multivariate normal distribution with mean $\bmu$ and covariance $\Sigma$.
We use $\bo$ to denote the all-zero vector and $\mathbf{1}$ for the all-one vector, where the context determines the length of the vector.

We introduce our model in \cref{sec:model}, then derive the model-based imputation methods in \cref{sec:imputation} and finally present model estimation algorithms in \cref{sec:cat_para_est}. Throughout the paper, we assume the missing complete at random (MCAR) mechanism: missing values are uniform and independent of any data. Nevertheless, we show our method performs reasonably well under missing at random (MAR) and missing not at random (MNAR) assumptions through experiments.
We also discuss how violating the MCAR assumption may affect the assumptions of the proposed model in \cref{sec:conclusion}.

\subsection{Extended Gaussian copula with categorical variables}\label{sec:model}
We first show how to model a categorical variable by transforming a latent Gaussian vector.
Then we extend this model to generate categorical vectors and mixed categorical and ordered vectors.
To ease the notation, we assume that all categorical variables have $K$ categories encoded as $\{``1",\ldots,``K"\}$.
It is straightforward to allow categorical variables with different numbers of categories.

\subsubsection{Univariate categorical variable}
We model a univariate categorical variable $x$ with $K$ categories as the argmax of a $K$-dim latent Gaussian $\bz=(z_1,\ldots,z_K)$ with some mean $\bmu=(\mu_1,\ldots,\mu_K)$ and identity covariance. That is,
\begin{equation}
        x := \argmax(\bz+\bmu), \qquad \bz  \sim \mathcal{N}(\bo, \Irm_{K}),
    \label{Eq:argmax_new}
\end{equation}
We call the distribution in~\cref{Eq:argmax_new} the \textit{Gaussian-Max} distribution.
Without loss of generality, we assume $\mu_1=0$, as the argmax is invariant under translations, i.e., $\mu \leftarrow \mu + \alpha \mathbf{1}$ for $\alpha \in \reals$.
While a dense covariance matrix is sometimes used for $\bz$~\citep{christoffersen2021asymptotically}, we prefer the model of~\cref{Eq:argmax_new} as it is identifiable.
\cref{theorem:cat_marginal} states that any categorical distribution corresponds to a unique choice of $\mu$ under \cref{Eq:argmax_new}.
All proofs are in the supplement.
In \cref{sec:cat_mu_est} we describe an algorithm to estimate $\bmu$ for a given categorical distribution.
\begin{theorem}[Existence and Uniqueness]
\label{theorem:cat_marginal}
For any categorical distribution $\Pbb[x=``k"]=p_k>0$ for $k=1,\ldots,K$ such that $\sum_{k=1}^K p_k=1$, there is a unique $\bmu\in\Rbb^K$ with $\mu_1=0$ such that
\begin{equation}
    \Pbb_{\bz\sim \mathcal{N}(\bo, \Irm_{K})}[\argmax (\bz+\bmu)=k]=p_k,\quad k=1,\ldots,K. 
    \label{Eq:cat_nonlinear_systems}
\end{equation}
\end{theorem}

\subsubsection{Multivariate categorical vector}
We model a categorical vector $\bx = (x_1,\ldots,x_p)$ by supposing each of its entries $x_j$ follows the Gaussian-Max distribution. That is, $x_j=\argmax (\bz^{(j)}+\bmu^{(j)})$ for some $\bmu^{(j)}$ and isotropic Gaussian $\bz^{(j)}$.
Additionally, the latent Gaussian variables $\bz^{(j)}$ corresponding to different categorical variables may be correlated. We call this model the \textit{categorical latent Gaussian (CLG)} model (see \cref{def:clg_model}).
For $\bz := (\bz^{(1)},\ldots,\bz^{(p)})$,
we use $\jz$ to denote the indices of $\bz^{(j)}$ in $\bz$, so $\bz_{\jz} = \bz^{(j)}$ for $j=1,\ldots,p$.
\begin{definition}[Categorical latent Gaussian]\label{def:clg_model}
For a categorical vector $\bx = (x_1,\ldots,x_p)$, we say $\bx$ follows the categorical latent Gaussian $\bx\sim \clg$, if there exists a correlation matrix $\Sigma$ and $\bmu$ such that (1) for $\bz  \sim \mathcal{N}(\bo, \Sigma)$, $ x_j = \argmax(\bz_{[j]}+\bmu_{[j]})$; 
(2) $\Sigma_{\jz,\jz}=\Irm_{K}$,
for every $j=1,\ldots,p$.
\end{definition}

In the CLG model, 
the value of $\bmu_{[j]}$ suffices to determine the marginal distribution of each categorical $x_j$,
as the marginal distribution of $\bz_{[j]}$ is $\mathcal{N}(\bo, \Irm_{K})$ and independent of $\Sigma$.
The correlation matrix $\Sigma$ introduces dependencies between different categorical variables in $\bx$.

Consider two categorical variables $x_{j_1}$ and $x_{j_2}$ whose joint distribution is described by $\Prm \in \Rbb^{K\times K}$, where $p_{kl}=\Pbb[x_{j_1}=k, x_{j_2}=l]$.
The CLG model captures the dependence between $x_{j_1}$ and $x_{j_2}$  by the correlation submatrix $\Sigma_{[j_1], [j_2]}$. 
Note that $\Prm$ has only $(K-1)^2$ free parameters as the rows and columns must sum to the associated marginal probabilities.  This means that $\Sigma_{[j_1], [j_2]}$ in the CLG model is not identifiable. 
We develop an identifiable variant of the CLG model in the supplement that uses only $(K-1)^2$ parameters in $\Sigma_{[j_1], [j_2]}$.
Both models share similar imputation performance, but the identifiable model is not invariant under permutation of the categorical labels while the model in \cref{def:clg_model} is invariant.
For the rest of this paper, we use the permutation-invariant CLG model.

\subsubsection{Mixed categorical and ordered vector}\label{sec:cat_mixed_dist}
We model mixed data that contains both categorical and ordered variables by combining the CLG model for categorical variables with the Gaussian copula model for ordered variables~\cite{feng2019high,zhao2020missing}.

To model $\bx$ with ordered variables, the Gaussian copula assumes $\bx \sim \gc$ is generated as an elementwise transformed Gaussian, i.e., $\bx=\bigf(\bz)=(f_1(z_1),\ldots,f_p(z_p))$, where each $f_j$ is a monotonic increasing function and $\bz \sim \mathcal{N}(\bo, \Sigma)$.
Denoting the cumulative distribution function (CDF) of $x_j$ as $F_j$,
the transformation $f_j$ is uniquely determined as $f_j = F_j^{-1} \circ \Phi$, where $\Phi$ denotes the standard Gaussian CDF and   $F_j^{-1}(y):=\inf\{x\in\Rbb:F_j(x)\geq y\}$.
Specifically,
ordinals result from thresholding the latent Gaussian variable.
For an ordinal $x_j$, the transformation $f_j(z)$ has the form $\sum_{s\in S}\mathds{1}(z>s)$ where $\mathds{1}(z>s)$ is $1$ if $z>s$ and $0$ otherwise. Here, the set $S$ is determined by $F_j$.
The CLG model has the same form as $\bx=\bigf(\bz)$ by writing
\begin{equation}
    x_j=f_j(\bz_{[j]};\bmu_{[j]}):=\argmax(\bz_{[j]}+\bmu_{[j]}).
    \label{Eq:argmax_transform}
\end{equation}
Hence, we propose the \textit{extended Gaussian copula (EGC)} model.
\begin{definition}[Extended Gaussian copula]\label{def:egc_model}
Write a mixed data vector as $\bx = (\bx_\cat,\bx_\ord)$ where $\bx_\cat$ collects all categorical variables and $\bx_\ord$ collects all  ordered variables. We say $\bx$ follows the extended Gaussian copula $\bx\sim \egc$ if there exists a correlation matrix $\Sigma$, an elementwise monotone $\bigf^\ord$, and a $\bmu$ such that 
$\bx_{\cat} \sim \textup{CLG}(\Sigma_{\textup{cat, cat}}, \bmu)$ and $\bx_{\ord} \sim \textup{GC}(\Sigma_{\textup{ord,ord}}, \bigf^\ord)$,
where 
$\Sigma = \left[\begin{array}{cc}
   \Sigma_{\textup{cat, cat}}  & \Sigma_{\textup{cat, ord}} \\
  \Sigma_{\textup{ord, cat}}   & \Sigma_{\textup{ord,ord}}
\end{array}\right].$
\end{definition}
We can also write $\bx = \bigf(\bz) = (\bigf^{\cat}(\bz_{\cat};\bmu), \bigf^{\ord}(\bz_{\ord}))$ for $\bz=(\bz_{\cat}, \bz_{\ord})\sim \Ncal(\bo, \Sigma)$
where $\bigf$ is defined by all transformation parameters, i.e., $\bigf^{\ord}$ and $\bmu$.
Unlike the original Gaussian copula, 
the dimension of the latent $\bz$ does not match that of the data $\bx$; instead, it is larger as subvectors of $\bz$ correspond to categorical entries of $\bx$.
The correlation matrix $\Sigma$ must also obey constraints on the submatrix $\Sigma_{\textup{cat, cat}}$ according to \cref{def:clg_model}.

\subsection{Missing data imputation}\label{sec:imputation}
Now we show how to impute missing values given an EGC model with known parameters.
For model parameter estimation, see \cref{sec:cat_para_est}.

Note each categorical $x_j$ corresponds to $K$ consecutive entries in $\bz$: the subvector $\bz_{[j]}$ that generates $x_j$. We use $[I]$ to denote the set of indices in $\bz$
that corresponds to a set of indices $I$ from $\bx$.
For example, $\jz$ follows this notation with $I=\{j\}$.
Two sets of indices are particularly important: the observed entries $\indexO$ and missing entries $\indexM$ for a given mixed data vector $\bx$.
Thus $\indexOz$ and $\indexMz$ are the latent dimensions in $\bz$ that generate $\xobs$ and $\xmis$, respectively.

Our imputation strategy follows \citep{zhao2020missing}. First,
the algorithm identifies a region of the latent space $\zobsz$ constrained by the observed entries. Then, it
imputes the latent dimensions by exploiting the Gaussian distribution of $\zmisz$ conditional on $\zobsz$. Finally, the algorithm uses the marginal transformation $\bigf$ to produce imputations in the data space $\xmis$.
Pictorially, we can visualize this process as
\[
\xobs \xrightarrow{\bigf_\indexO^{-1}}{}  \zobsz  \xrightarrow{\Sigma}{}  \zmisz  \xrightarrow{\bigf_\indexM}{}  \xmis.
\]

\paragraph{Multiple imputation.}
Multiple imputation creates several imputed datasets by sampling from the distribution of missing entries conditional on the observations. 
For the EGC model, \cref{alg:multiple_impute} samples from the distribution of $\xmis$ using two facts: (1) given $\xobs$, the variable $\zobsz$ follows a truncated Gaussian distribution truncated to $\bigf_\indexO^{-1}(\xobs)=\prod_{j\in\indexO}f_j^{-1}(x_j)$; and (2) the random variable $\zmisz|\zobsz$ is Gaussian distributed.
\cref{sec:cat_truncnorm} shows how to sample from the truncated Gaussian random variable given by $\zobsz|\xobs$.
Using the empirical distribution of drawn samples, we can estimate the category probability for a missing categorical variable and build confidence intervals for a missing continuous variable.
\begin{algorithm}
	\caption{Multiple imputation via the extended Gaussian Copula}\label{alg:multiple_impute}
	\begin{algorithmic}[1]
		\State \textbf{Input:} \# of imputations $m$, data vector $\bx$ observed at $\indexO$,
		model parameters $\bigf$ and $\Sigma$.
            \For{$s=1,2,\ldots,m$}
			\State Sample $\hat\bz^{(s)}_{\indexOz}\sim \zobsz|\xobs:\quad  \mathcal{N}(\bo,\hat \Sigma_{\indexOz, \indexOz})$ truncated to $\bigf_{\indObs}^{-1}(\xobs)$ 
			\State Sample $\hat\bz^{(s)}_{\indexMz}\sim \zmisz|\zobsz: \quad
			\mathcal{N}\left(\SigmaMOz\hat\bz^{(s)}_{\indexOz},\SigmaMMz-\SigmaMOz\SigmaOOinvz\SigmaOMz\right)$ 
			\State Compute $\hat\bx^{(s)}_{\indMis}=\bigf_{\indMis}(\hat\bz^{(s)}_{\indexMz})$ 
			\EndFor
		\State \textbf{Output:} $\{\hat\bx^{(s)}_{\indMis}|s=1,..,m\}$.
	\end{algorithmic}
\end{algorithm}

\paragraph{Single imputation.}
To provide an accurate single imputation under the EGC model, we impute based on the conditional mean of the latent Gaussian, i.e., 
\begin{equation}
        \hat \bx_{\indexM} = \bigf_{\indexM}(\Ebb[\zmisz|\xobs,\Sigma, \bigf])=\bigf_{\indexM}(\SigmaMOz\SigmaOOinvz\Ebb[\zobsz|\xobs,\Sigma, \bigf]).
        \label{Eq:single_imputation}
\end{equation}
To compute $\Ebb[\zobsz|\xobs,\Sigma, \bigf]$, see \cref{sec:cat_truncnorm}.
With a known mean $\Ebb[\zobsz|\xobs,\Sigma, \bigf]$ and marginal distribution $\bigf_\indexM$, computing the imputation $\hat \bx_\indexM$ is straightforward.
Specifically, for categorical  missing $x_j$, we first compute the $K$-dim $\hat \bz_{\jz}:=\Ebb[\bz_{\jz}|\xobs,\Sigma, \bigf]$ and then  impute as $\hat x_j=\argmax(\hat \bz_{\jz}+\bmu_{[j]})$.

\paragraph{Online imputation.}
Online imputation is the task to impute missing entries in data streams arriving at different times.
We can conduct online imputation using the EGC model following \citep{zhao2020online}:
upon observing an incomplete data batch, impute the missing entries with a current saved model, and then update the model using a new observation.
See \cref{sec:supp_online} on how to update the model. 

\subsection{Parameter estimation}\label{sec:cat_para_est}
Parameter estimation for the EGC model consists of two steps: (1) we form an estimate $\hat \bigf$ of the marginal transformation $\bigf$, 
and (2) we estimate the copula correlation $\Sigma$ given $\hat \bigf$.
The marginal transformation differs for ordered and categorical variables.
For an ordered $x_j$, since $f_j=F_j^{-1}\circ \Phi$, we can simply estimate $F_j^{-1}$ using the empirical quantile function on the observed entries of $x_j$ as in \citep{liu2009nonparanormal,zhao2020missing}.
For a   categorical variable $x_j$, 
 we show how to estimate $f_j(\cdot;\bmu_{[j]})$ in \cref{sec:cat_mu_est} and how to estimate $\Sigma$ in \cref{Eq:argmax_transform} by extending 
 estimation algorithms from \citep{zhao2020missing}.

\subsubsection{Marginal estimation for categorical variables}\label{sec:cat_mu_est}
Here we estimate the mean $\bmu$ for which the EGC model, $\egc$, matches the observed categorical marginal distribution.
Each subvector $\bmu_{[j]}$ corresponding to categorical $x_j$ may be estimated independently of all the others, so we drop the variable subscript $j$ to ease the notation. 

We must find $\bmu$ in \cref{Eq:cat_nonlinear_systems} so that the resulting category frequency $p_k$ matches the observed frequency,
subject to $\mu_1=0$: this setup gives $K-1$ nonlinear equations with $K-1$ unknowns.
Unfortunately, there is no closed form solution.
Thus we resort to iterative root finding algorithms,
which use as oracles the value and derivative of $\Pbb[\argmax(\bz+\bmu)=k]$ in \cref{Eq:cat_nonlinear_systems}.
We use two approximation techniques to evaluate these oracles efficiently.

First we approximate the probability of argmax as the expectation of softmax by noticing
\begin{align*}
    \Pbb_{\bz\sim \mathcal{N}(\bo, \Irm_{K})}[\argmax(\bz+\bmu)=k] = \lim_{\beta\rightarrow \infty}
    \Ebb_{\bz\sim \mathcal{N}(\bo, \Irm_{K})}\left[\softmax(\beta (z_k+\mu_k);\beta (\bz+\bmu))\right],
\end{align*}
where $\softmax(z_k;\bz)=\exp(z_k)/\sum_{l=1}^K\exp(z_l)$.
This approximation is accurate for large $\beta$.
Second we approximate the expectation using Monte Carlo samples with the reparameterization trick \citep{kingma2013auto}: 
\begin{equation}
            \Ebb\left[\softmax(\beta (z_k+\mu_k);\beta (\bz+\bmu))\right]
        \approx \frac{1}{M}\sum_{i=1}^M\softmax(\beta (\bar z^i_k + \mu_k);\beta(\bar\bz^i+\bmu)),
        \label{Eq:cat_sample_softmax}
\end{equation}
where $\bar \bz^i \overset{i.i.d.}\sim \mathcal{N}(\bo, \Irm_{K})$.
With these approximations, we solve for $\bmu$ subject to $\mu_1=0$ in \cref{Eq:cat_approximated_nonlinear},
\begin{equation}
    \frac{1}{M}\sum_{i=1}^M\softmax(\beta (\bar z^i_k + \mu_k);\beta(\bar\bz^i+\bmu)) = p_k,
    \label{Eq:cat_approximated_nonlinear}
\end{equation}
for $k=2,\ldots,K$.
The modified Powell method implemented in  \texttt{SciPy} computes an accurate solution.

\paragraph{Optimization hyperparameter selection.}
Both $M$ and $\beta$ are easy to select as larger values always improve approximation accuracy: larger $M$ leads to better MC approximation and larger $\beta$ leads to better argmax probability approximation. 
We use $M=5000$ and $\beta=1000$ as the default in our experiments,
and find this setting works well across all our experiments.
Theoretically, a  large $M$ is needed when the number of categories $K$  is large for accurate MC approximation; 
conversely, when $K$ is small $\beta$ must be large for accurate argmax probability approximation. 
However, in practice, it is rare to have $K$ larger than 50. 
Also, for very rare categories, e.g., those with probability $<10^{-4}$, the limited samples may not suffice to learn correlations; one solution is to drop rare categories or 
merge rare categories into larger groups. 
Our default setting achieves satisfying accuracy in synthetic experiment variants even with $K=50$ and tiny category probabilities (as low as $10^{-4}$). 

\paragraph{Gaussian-max versus Gumbel-softmax.}
Both the Gaussian-max and the Gumbel-softmax distributions models a categorical variable taking value in $K$ categories as $x=\argmax(\bz+\bmu)$ for some $\bmu\in \Rbb^K$ with independent identically distributed entries of $\bz$.
The difference lies in the underlying distribution of the latent continuous $\bz$: the Gumbel-softmax uses the Gumbel distribution, while we use the Gaussian distribution.
Different choices serve for different goals. 
Using the Gaussian distribution, it is easy to model and estimate the joint distribution for categorical variables as well as their mix with ordered variables.
Using the Gumbel distribution, we can write down the closed form expression of $\bmu$ in Eq. (2): $\Pbb(\argmax(\bz+\bmu)=k)=\softmax(\mu_k;\bmu)$ and $\mu_k=\log p_k$.
Thus it is simple to compute the gradients of the argmax probability $\Pbb(\argmax(\bz+\bmu)=k)$ w.r.t. its parameter, making it widely used in neural networks \cite{JangGP17}.

\subsubsection{Copula correlation estimation}\label{sec:cat_cor_est}
Now we show how to find the maximum likelihood estimator (MLE) of copula correlation $\Sigma$, which models the multivariate dependence structure. 
Given i.i.d. samples $\bx^i \overset{\textup{iid}}{\sim}\egc$ observed at $\indexO_i$ and missing at $\indexM_i$ for $i=1,\ldots,n$.
We maximize the observed likelihood by integrating over the latent space that maps to the observation $\xio$, as in \cref{Eq:likelihoodMixed}:
\begin{equation}
	 \ell_{\textup{obs}}(\Sigma; \xio)=\frac{1}{n}\sum_{i=1}^n\int_{\ziobsz \in \bigf^{-1}_{\indexO_i}(\xio)}	\phi(\ziobsz ;\bo,\Sigma_{\indexO_i,\indexO_i})\; d\ziobsz.
	 \label{Eq:likelihoodMixed}
\end{equation}
The form of this likelihood matches that of the Gaussian copula for ordered variables \citep{zhao2020missing} except that the latent integration is in dimensions $\indexOi$ instead of $\indexO_i$.
Hence the Expectation Maximization (EM) algorithm in \citep{zhao2020missing} also works in our case by integrating over the appropriate latent indices.
Concretely, 
for the E-step at iteration $t+1$, 
we first compute the expectation of the joint likelihood of $(\bz^i, \xio)$ over the distribution of $\bz^i$ conditional on the observation $\xio$ and the previous correlation estimate $\Sigma^{(t)}$
using the Gaussianity of $\bz^i$,
denoted as $Q(\Sigma;\Sigma^{(t)})$:
\begin{equation}
    Q(\Sigma;\Sigma^{(t)}) = c -\frac{1}{2}\left(\log\textup{det}(\Sigma)+
\tr\left(\Sigma^{-1}\frac{1}{n}\sum_{i=1}^n\Ebb[\bz^i(\bz^i)^\top |\xio,
\Sigma^{(t)}]\right) \right),
\label{Eq:Qfunc}
\end{equation}
where $c$ is a constant.
Then we compute $\argmax_{\Sigma}Q(\Sigma;\Sigma^{(t)})$ in the M-step, which is the expected ``empirical covariance matrix'' of $\bz^i$.
We update $\Sigma^{(t+1)}$ to its corresponding correlation matrix to satisfy the copula parameter constraint similar to \cite{guo2015graphical, zhao2020missing}:
\begin{equation}
    \Sigma^{(t+1)} \leftarrow P_{\textup{cor}}\left(\frac{1}{n}\sum_{i=1}^n\Ebb[\bz^i(\bz^i)^\top |\xio, \Sigma^{(t)}]\right),
    \label{Eq:cat_em_update}
\end{equation}
where $P_{\textup{cor}}(\Sigma)$ returns the correlation matrix corresponding to a  covariance matrix $\Sigma$.
The EM algorithm is guaranteed to strictly increase the likelihood in \cref{Eq:likelihoodMixed} and converges to a stationary point \cite[Chapter~3]{mclachlan2007algorithm}.
The main computation is the expected sample covariance of $\bz^i$.
Dropping the row index,
evaluating $\Ebb[\bz\bz^\top|\xobs, \Sigma^{(t)}]$ requires computing $\Ebb[\zobsz|\xobs,\Sigma^{(t)}]$ and  $\Covrm[\zobsz|\xobs,\Sigma^{(t)}]$.
We give details in the supplement and show how to evaluate the reduced quantities in \cref{sec:cat_truncnorm}.

Recall the correlation $\Sigma$ must satisfy the constraint in \cref{def:clg_model}:
the submatrix $\Sigma_{\jz,\jz}$ for each categorical index $j$
must be the identity.
The correlation computed by EM generally does not 
satisfy this constraint.
Instead, 
we use the function $P_{\cat}(\Sigma)$ to correct the correlation  $\Sigma$:
\begin{equation}
    \Sigma \leftarrow P_\cat(\Sigma)=A\Sigma A^\top \mbox{ where }A= \textup{diag}(\Sigma_{[1],[1]}^{-1/2},\ldots,\Sigma_{[p_{\cat}],[p_{\cat}]}^{-1/2}, \Irm_{p_\ord}),
    \label{Eq:cat_cor_projection}
\end{equation}
by recognizing that $\Covrm[\Sigma_{\jz,\jz}^{-1/2}\bz_{[j]}]=\Irm_K$ for each categorical index $j$,
when $\bx_\cat$ has $p_\cat$ entries and $\bx_\ord$ has $p_\ord$ entries.
\begin{algorithm}
	\caption{Estimation algorithm for the extended Gaussian Copula}\label{alg:modelest}
	\begin{algorithmic}[1]
		\State \textbf{Input:} Partial observation $\{\xio\}_{i=1}^n$,  correlation initialization $\Sigma^{(0)}$, $M=5000, \beta=1000.$
		 \For{each variable index $j$} \Comment{Marginal estimation starts}
		 \State Collect all observed data: $\bX_j=\{x^i_j:j\in \indexO_i, i=1,\ldots,n\}$ 
		\If{$x_j$ is categorical}
		\State Solve \cref{Eq:cat_approximated_nonlinear} for $\hat\bmu$  with input $M, \beta$ and $p_k$ equals to the frequency of ``$k$'' in $\bX_j$.
		\State Compute $\hat f_j(\cdot;\bmu_{[j]})$ in \cref{Eq:argmax_transform} with $\bmu_{[j]}$ as the solved solution $\hat \bmu$.
		\Else
		\State Compute $\hat f_j=\hat F_j^{-1}\circ \Phi$ where $\hat F_j^{-1}$ is the empirical quantile function of $\bX_j.$
		\EndIf
		\EndFor \Comment{Marginal estimation ends}
            \For{$t=0,1,2,\ldots$ until convergence} \Comment{EM algorithm starts}
			\State Compute $\Ebb[\bz^i(\bz^i)^\top |\xio, \Sigma^{(t)}, \hat \bigf]$ for $i=1,\ldots,n$.
			\State Update $\Sigma^{(t+1)}$ as in \cref{Eq:cat_em_update}.
			\State Correct $\Sigma^{(t+1)}\leftarrow P_\cat(\Sigma^{(t+1)})$ as in \cref{Eq:cat_cor_projection}.
			\EndFor \Comment{EM algorithm ends}
		\State \textbf{Output:} marginal estimation $\hat \bigf$, correlation estimation $\hat \Sigma = \Sigma^{(t+1)}$.
	\end{algorithmic}
\end{algorithm}
\paragraph{Computational cost.}
\cref{alg:modelest} summarizes the complete estimation algorithm for the EGC model.
Marginal estimation only takes a small fraction of total runtime (less than $4\%$ in all experiments of this paper).
We observe that the EM algorithm mostly converges in less than $50$ iterations.
Each iteration has time complexity $O(\alpha nd^3)$ where $\alpha$ denotes the observed entry ratio, $n$ denotes the number of samples and $d=p_\ord + p_\cat K$ denotes the latent dimension.
Line 12 of \cref{alg:modelest} takes the vast majority of the computation time.
There are many ways to accelerate the computation.
For large $n$,
we can parallelize the computation in line 12 of \cref{alg:modelest} or
conduct minibatch EM training as in \cite{zhao2020online}.
The minibatch EM performs model updates using randomly sampled data batch and achieves significant speedup without sacrificing accuracy.
For large $d$,
we can reduce the cubic time complexity in terms of $d$ to linear by imposing a low rank structure on  $\Sigma$ following \cite{zhao2020matrix}, that is $\Sigma = WW^\top+\sigma^2\Irm_d$ where $W\in \Rbb^{d\times k}$ with $k\ll d$ and $\sigma^2 \in (0,1)$.
See \cref{sec:supp_LRGC} for details.

\subsubsection{Truncated Gaussian  with categorical variables}\label{sec:cat_truncnorm}
Finally, we show how to estimate moments and sample from $\zobsz|\xobs$, as these operations are required for imputation and correlation estimation.
Note $\zobsz|\xobs$ follows a truncated Gaussian distribution truncated into the region $\bigf_\indexO^{-1}(\xobs)$.
For each ordered variable with index $j\in\indexO$,   $f_j^{-1}(x_j)$ is an interval. (For continuous $x_j$, this interval is a single point.)
Thus without observed categoricals, $\bigf_\indexO^{-1}(\xobs)$ is a Cartesian product of intervals. 
We call this distribution the \emph{interval-truncated Gaussian},
for which 
accurate sampling \cite{botev2017normal} and effective moment estimation \cite{zhao2020missing} methods have been developed.

The truncation region is no longer a Cartesian product of intervals with observed categoricals.
Fortunately, for any $\xobs$ we can find an involution $A_\bx$ (i.e., $A_\bx^2=\Irm$) so that $A_\bx\zobsz|\xobs$ follows an interval-truncated Gaussian distribution.
Consequently, 
\begin{equation}
    \Ebb[\zobsz|\xobs] = A_\bx\Ebb[A_\bx\zobsz|\xobs],\quad  \Covrm[\zobsz|\xobs] = A_\bx\Covrm[A_\bx\zobsz|\xobs]A_\bx^\top.
\end{equation}
Similarly, to sample  $\bz_i\overset{\textup{iid}}{\sim}\zobsz|\xobs$, we can first sample $\tilde \bz_i\overset{\textup{iid}}{\sim}A_\bx\zobsz|\xobs$ and then use $\bz_i=A_x\tilde \bz_i$.
To derive $A_\bx$, first consider that when a categorical variable $x$ takes value $k$,
by \cref{Eq:argmax_transform} we have:
\begin{equation}
     f^{-1}(x;\bmu)=\{\bz \in \Rbb^K:\argmax(\bz+\bmu)=k\}=\{\bz \in \Rbb^K:z_j+\mu_j<z_k+\mu_k, \mbox{ for }j\neq k\},
     \label{Eq:cat_argmax_inv}
\end{equation}
Define $\tilde \bz=A_k  \bz$ and $\tilde \bmu=A_k \bmu$ where  $A_k=-\Irm_K+\sum_{i=1}^KE_{ik} + E_{kk}$ (matrix $E_{ij}$ has $1$ at $(i,j)$-th entry and $0$ elsewhere).
We can rewrite \cref{Eq:cat_argmax_inv} to $\{\tilde\bz:\tilde z_{j}+\tilde\mu_{j}>0, \mbox{ for }j\neq k\}$, a Cartesian product of intervals.
We can find a matrix $A_j$ as shown above for every categorical variable $x_j$.
Let $A_\bx$ be $\textup{diag}(A_{1},\ldots,A_{p_\cat}, \Irm_{p_\ord})$.
Then $A\zobsz$ follows interval truncated Gaussian and $A_\bx$ is an involution.

\section{Experiments}
In the experiments, we compare the accuracy of single imputations using our method \texttt{EGC} and several competitors.
It is more challenging to compare the accuracy for multiple imputation (MI) as MI seeks to recover the correct \emph{distribution} which is mostly unknown in practice.
Nevertheless, we designed a synthetic experiment in \cref{sec:MI_experiment}.
As shown in \cref{alg:modelest},
\texttt{EGC} does not require any model hyperparameter but only optimization hyperparamters $M$ and $\beta$, for which
we use fixed values $M=5000$ and $\beta=1000$ across all our experiments.
We show in \cref{sec:cat_marginal} that this option already accurately estimates the categorical marginals.
All codes
are provided in a Github repo\footnote{\scriptsize \url{https://github.com/yuxuanzhao2295/Mixed-categorical-ordered-imputation-extended-Gaussian-copula}}. 

\paragraph{Competitors.}
We implement several imputation methods for mixed categorical and ordered data.
Among iterative imputation algorithms,
we implement 
\texttt{missForest} \citep{stekhoven2011missforest,missForest} and \texttt{MICE} \citep{buuren2010mice,mice}. 
Among LRMC algorithms, we implement \texttt{imputeFAMD} \citep{audigier2016principal, missMDA} and \texttt{softImpute} \cite{mazumder2010spectral,softImpute}.
Both methods one-hot encodes categorical variables.
\texttt{imputeFAMD} includes special weighting strategy incorporating different variables types, while \texttt{softImpute} simply treats the data as numerical.
We also implement a \texttt{baseline} imputation: majority vote for categorical variables and median imputation for ordered variables.
We only need to choose hyperparameters for \texttt{imputeFAMD}  and \texttt{softImpute}  to have satisfying performance,
for which we further mask $20\%$ of observed training entries as a validation set.
Through all experiments in this paper, \texttt{softImpute} is the fastest method and \texttt{EGC} comes second overall.
We include all implementation and runtime details in the supplement. 

\paragraph{Evaluation.}
To evaluate an imputation method, we compute
misclassification error (ME) for categorical variables and mean absolute error (MAE) for ordered variables at masked entries.

\subsection{Synthetic experiments}\label{sec:experiment_synthetic}
We generate synthetic datasets with $2000$ samples and $p\in \{11, 13, 15\}$ variables from the EGC model in \cref{def:egc_model} with randomly drawn parameters.
The $p$ variables include  $5$ exponentially distributed continuous variables, $5$ ordinal variables with five ordinal levels, and $p_\cat\in\{1,3,5\}$ categorical variables with six categories ($K=6$).
We randomly remove $30\%$ entries of $\bX$ for imputation evaluation and generate $10$ masked datasets using different seeds.

\cref{fig:res_sim} shows our method \texttt{EGC} performs the best for all variable types in all scenarios.
For categorical variables,
\texttt{EGC} yields a larger improvement with fewer categorical variables.
For both continuous and ordinal variables, \texttt{EGC} substantially outperforms all other methods.
Several methods that impute categorical variables quite well struggle to outperform a simple baseline on ordered variables.
The supplement includes more experiments that differ by the number of categories for categorical variables ($K=3, 9$), the missing ratio ($20\%$ and $40\%$) and missing mechanisms (MAR and MNAR). 
In general, these experiments show \texttt{EGC} performs well for both categorical and ordered variables in mixed data when the model is well-specified.
We also ran experiments with much larger sample size and found EGC is much faster compared to missForest, as shown in 
\cref{tab:large_n}.

 \begin{figure}
 	\centering
 	\includegraphics[width=\linewidth]{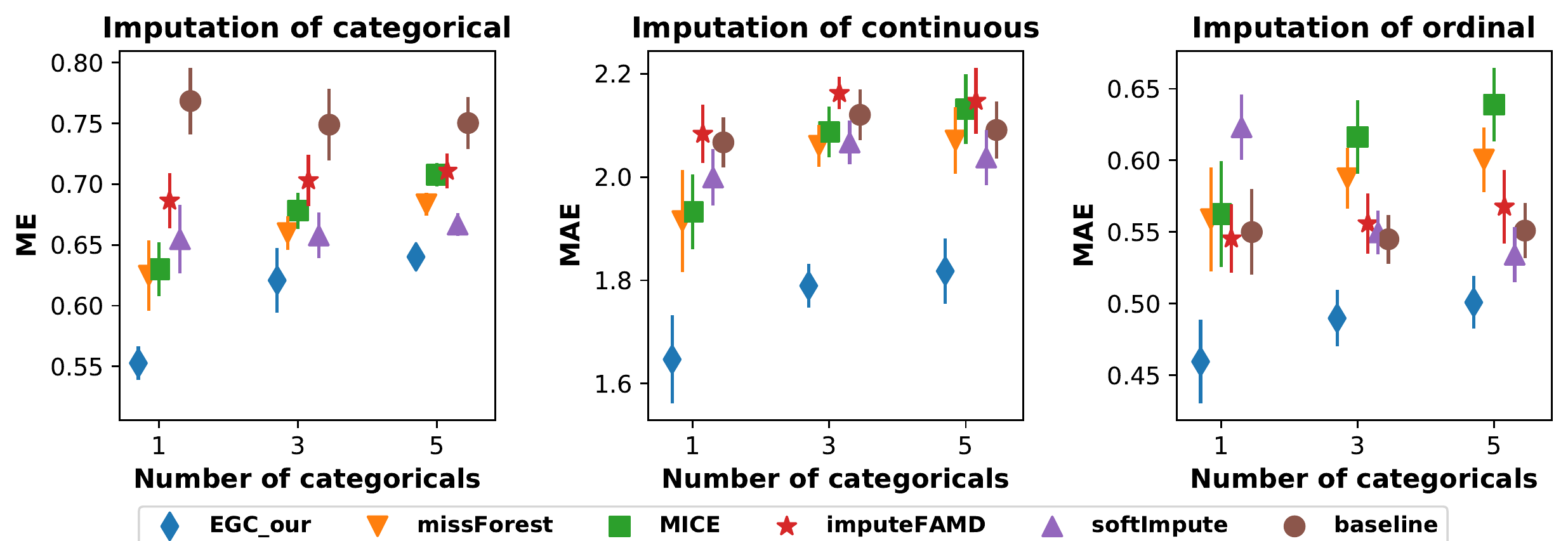}
 	\caption{Imputation error on synthetic mixed data with $2000$ samples. There are $5$ continuous variables, $5$ ordinal variables and  $1/3/5$ categorical variables with six categories, reported over $10$ repetitions (error bars indicate standard deviation).}
 	\label{fig:res_sim}
 \end{figure}

\begin{table}
	\centering
	\caption{Mean(sd) of runtime in seconds and imputation error for each variable type (cat for categorical, cont for continuous, ord for ordinal) over 10 repetitions. See Figure 1 for the error metric.}
	\label{tab:large_n}
	\begin{tabular}{lllll}
		\toprule
$n=2000$	& Runtime   &  Cat Error & Cont Error  & Ord Error \\
		\midrule
EGC\_our & $\mathbf{33(2)}$ & $\mathbf{0.64(0.01)}$ & $\mathbf{1.81(0.06)}$ & $\mathbf{0.50(0.02)}$ \\
missForest & $53(11)$ & $0.68(0.01)$ & $2.06(0.07)$ & $0.59(0.02)$ \\
\midrule
$n=10000$	& Runtime   &  Cat Error & Cont Error  & Ord Error \\
		\midrule
EGC\_our & $\mathbf{107(4)}$ & $\mathbf{0.64(0.01)}$ & $\mathbf{1.81(0.04)}$ & $\mathbf{0.45(0.01)}$ \\
missForest & $1006(70)$ & $0.66(0.01)$ & $2.05(0.04)$ & $0.52(0.02)$ \\
 \midrule
$n=20000$	& Runtime   &  Cat Error & Cont Error  & Ord Error \\
		\midrule
EGC\_our & $\mathbf{202(9)}$ & $\mathbf{0.64(0.01)}$ & $\mathbf{1.81(0.06)}$ & $\mathbf{0.42(0.01)}$ \\
missForest & $3714(267)$ & $0.66(0.01)$ & $2.04(0.04)$ & $0.48(0.01)$ \\
		\bottomrule
	\end{tabular}
\end{table}

\subsection{Real data experiments}\label{sec:experiment_realdata}
We employ six real-world datasets from the UCI machine learning repository here \cite{Dua:2019}.
Each dataset contains both categorical and ordered variables (details in the supplement).
We randomly remove $20\%$ of observed entries as a test set to evaluate imputation accuracy and generate $10$ masked datasets using different seeds.
Since variables in real data often have very different scales, e.g., number of categories for categorical variables and standard deviation for continuous variables,
we use an evaluation metric that compensates for the imputation difficulty across variables.
For an imputation method,
we normalize its imputation error for each variable by the imputation error of \texttt{baseline} and then compute the average normalized error for each categorical variable and ordered variable, similar to \cite{zhao2020missing}.
We denote the scaled imputation error for categorical variables and ordered variables as SME and SMAE, respectively.

\cref{fig:res_sim} shows that our method \texttt{EGC} is one of the best method in almost all evaluation metrics.
\texttt{missForest} comes second but its total runtime is  four times longer than that of \texttt{EGC}.
Interestingly, \texttt{softImpute} imputes quite accurately for categorical data but poorly for ordered data.
The supplement includes more experiments that differ by the missing ratio and missing mechanism, demonstrating that \texttt{EGC} consistently outperforms other methods.

 \begin{figure}
 	\centering
 	\includegraphics[width=\linewidth]{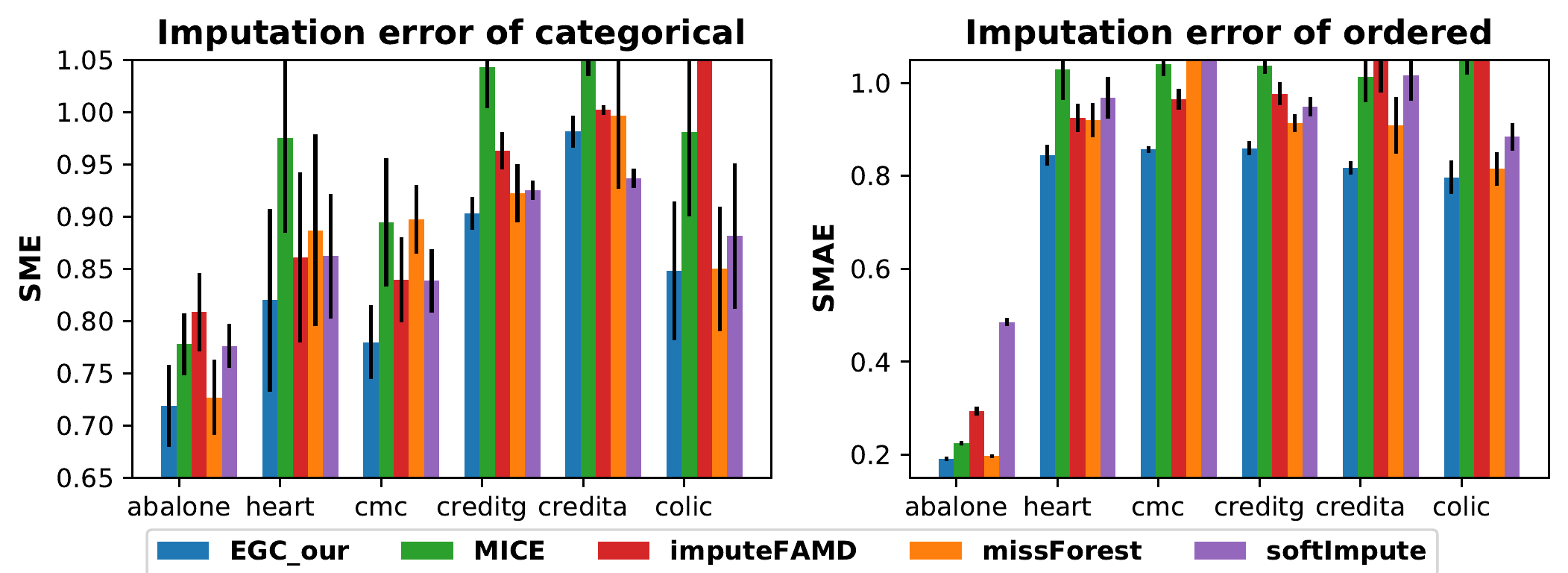}
 	\caption{Imputation error of categorical variables and of ordered variables, i.e., ordinal and continuous, on 6 UCI datasets at $20\%$ missingness. Results shown as mean $\pm$ standard deviation. }
 	\label{fig:res_sim}
 \end{figure}
 
 \subsection{Categorical marginal estimation accuracy}\label{sec:cat_marginal}
 Here we verify that \texttt{EGC}  can learn arbitrary categorical distribution, i.e. the algorithm in \cref{sec:cat_mu_est} finds the root $\bmu$ of \cref{Eq:cat_nonlinear_systems} accurately.
 Concretely, we compare  two sides of \cref{Eq:cat_nonlinear_systems},
 where the right side is the empirical category probability and the left side is approximated at the estimated $\bmu$ using $10000$ random samples.
Across all used datasets in \cref{sec:experiment_synthetic} and \cref{sec:experiment_realdata},
 \cref{fig:res_cat_marginal} shows the estimated probability indeed matches the true probability well and the performance is robust to the number of categories.
 Using a larger  $M$ in \cref{Eq:cat_approximated_nonlinear} can further reduce the estimation error, but we find it does not significantly improve the imputation performance. 
 
  \begin{figure}
 	\centering
 	\includegraphics[width=\linewidth]{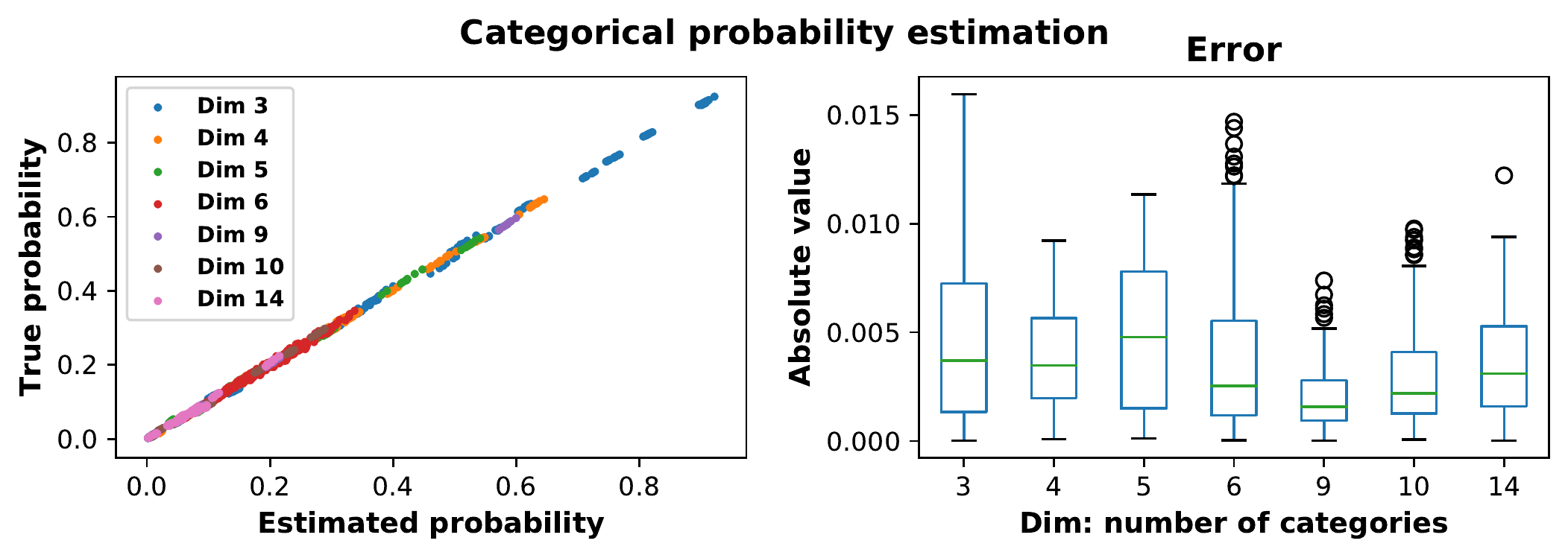}
 	\caption{Performance of categorical distribution fitting. The results are collected from $70$ datasets:  each of $7$ datasets ($1$ synthetic and $6$ real data) is randomly masked $10$ times.
 	 Dim is the number of categories for each categorical variable.
 	\textit{Left} plots the true probability vs the estimated probability of each categorical value, colored by dim, and \textit{Right} plots the absolute values of the estimation error as a boxplot, grouped by dim.
 	}
 	\label{fig:res_cat_marginal}
 \end{figure}

\section{Discussion and conclusion}\label{sec:conclusion}
We have presented a probabilistic model, the EGC model, for mixed categorical and ordered data. We developed an imputation method based on the EGC model that empirically demonstrates state-of-the-art performance.
Here we discuss two limitations of this paper.  
First, the estimation algorithm for the EGC model admits guarantees under the MCAR mechanism.  
Concretely, 
the marginal estimation requires the MCAR mechanism to accurately match the empirical marginal distribution. 
Supposing the marginals have been estimated accurately, 
estimating the correlations consistently is possible under the less restrictive MAR mechanism.
While the proposed algorithm in this paper works reasonably well empirically under other missing mechanisms,
it may be possible to improve performance by estimating and exploiting the missing mechanism for model estimation.
Second, the EGC model 
uses a Gaussian dependence structure after a marginal transformation.
This paper demonstrates the benefits of modeling the marginals explicitly.
Learning a more complex dependence structure, e.g., with neural networks, may yield further improvements.

\section*{Acknowledgement}
MU and YZ gratefully acknowledge support from
NSF Award IIS-1943131,
the ONR Young Investigator Program,
and the Alfred P. Sloan Foundation. AT is supported by NSF Grants No. DMS-1952757 and No. DMS-2045646 as well as a Simons Fellowship in Mathematics.

\bibliographystyle{acm}
\bibliography{refs}


\appendix
\section{Online imputation and minibatch training}
\label{sec:supp_online}
Here we introduce how to update the parameter of the extended Gaussian copula on a new incomplete data batch, following \cite{zhao2020online}.
For the purpose of online imputation, we update both the marginal and copula correlation at newly observed data.
For the purpose of offline minibatch training, we use all available data to estimate the marginal and only update the copula correlation at each data batch.

To update the marginal, we first update the stored observation window,
the most recent $m$ (a hyperparameter) observations for each variable,
as in \cite{zhao2020online}, then re-compute the empirical category probability, and re-solve the nonlinear equations in  \cref{Eq:cat_nonlinear_systems} with previous solution as the initialization. 

To update the copula correlation $ \Sigma^{(t)}$, we first conduct a normal EM iteration on the new data batch and denote the estimation as $\hat\Sigma$. Then we perform an incremental update: $\hat\Sigma^{(t+1)}=\gamma_t \hat\Sigma + (1-\gamma_t) \Sigma^{(t)}$, where $\gamma_t$ is a learning rate in $(0,1)$.
The updating rule is a special case on the  online EM algorithm in \cite{cappe2009line}.
See \cite{zhao2020online} for more details.

\section{Low rank Gaussian copula for categorical and ordered mixed data}
\label{sec:supp_LRGC}
We introduce a special case of the extended Gaussian copula here following the low rank Gaussian copula \cite{zhao2020matrix}.
Shortly, we assume the latent Gaussian vector $\bz\in\Rbb^d$ which generates the data $\bx$ follows the probabilistic principal component analysis model \cite{tipping1999probabilistic}:
\[
	\bz = \bW\bt + \beps, \bt \sim \mathcal{N}(\bo,\Irm_r), \beps \sim \mathcal{N}(\bo,\sigma^2\Irm_d), 
\]
where $\bt$ and $\beps$ are independent and
$\bW\in \Rbb^{d\times r}$ with $d>r$. 
Consequently,  
the copula correlation $\Sigma = \bW\bW^\top + \sigma^2\Irm_d$ contains only $dr+1$ free parameters.
An EM algorithm has been developed in \cite{zhao2020matrix} to solve $\bW$ and $\sigma^2$.
Similar to the results in \cref{sec:cat_cor_est}, the EM algorithm for the low rank Gaussian copula also works here by integrating over the appropriate latent indices.
The adjustment is to make sure that for a general estimate $\bW$ and $\sigma^2$, the formed correlation $\bW\bW^\top + \sigma^2\Irm_d$ satisfies that the submatrix $\Sigma_{[j],[j]}=\bW_{[j]}\bW_{[j]}^\top + \sigma^2\Irm_K$ at each categorical index $j$ must be identity.
This requirement implies a constraint in the rank $r$: it must be no smaller than $K$.
In other words, the used rank must be no smaller than the largest number of categories among all categorical variables, otherwise $\bW_{[j]}\bW_{[j]}^\top$ will have rank $r$ and cannot be $(1-\sigma^2)\Irm_K$.
We can do a correction on $\bW$ only ($\sigma^2$ remains unchanged) similar to \cref{Eq:cat_cor_projection}. 
That is, at each categorical index $j$,
with the singular value decomposition of $\bW_{[j]}=\bU_j \bD_j\bV_j^\top$, we correct $\bW_{[j]}\leftarrow \bU_j \sqrt{1-\sigma^2}\Irm_{K,r}\bV_j^\top$,
where $\Irm_{k,r}\in\Rbb^{K\times r}$ has $1$ in its diagonal.

\section{An experiment on multiple imputation}
\label{sec:MI_experiment}
Multiple imputation (MI) is often used in specific case studies (\cite{hollenbach2021multiple} e.g.) where no ground truth is present. There is no generally accepted metric to compare MI methods. Nevertheless, we designed a synthetic experiment scenario to showcase that our MI provides more accurate distribution estimation of the missing entries than MICE using much less time, and MICE is the only other imputation method implemented in this paper that supports MI. For the designed synthetic experiment, the true distribution of missing entries can be accurately approximated. Thus we can compare the estimated distribution provided from MI to the true distribution.

Specifically, we use a synthetic dataset with 5 exponentially distributed continuous variables and 1 categorical variable with 6 categories, 2000 samples and 30\% missing ratio, generated similar to that in \cref{sec:experiment_synthetic}. For straightforward evaluation, we focus on the categorical variable, for which we use its conditional distribution given all 5 continuous observations as the true distribution. We evaluate the distribution estimation error through the cross entropy loss (smaller is better) between true and estimated category probability, averaged across all missing categorical entries. 
For both our EGC method and MICE, we take 100 multiple imputation samples and compute the empirical probability for each category as an estimate.
The results are reported in \cref{tab:MI_sim}.
Our EGC MI achieves significantly smaller loss and smaller runtime than MICE.

\begin{table}[]
    \centering
    \begin{tabular}{|c|c|c|}\hline
        Method & Cross Entropy Loss & Runtime  \\
        \hline
       EGC (our)  & \textbf{0.278 (0.039)} & \textbf{14 (1)}\\
       MICE  & 0.340 (0.004) &  43 (1)\\
         \hline
    \end{tabular}
    \caption{Distribution estimation error of 100 multiple imputation samples on the missing categorical. The error is measured using cross entropy loss (smaller is better), averaged over all missing entries. The runtime is recorded in seconds. Numbers stand for mean (sd) over 10 repetitions.}
    \label{tab:MI_sim}
\end{table}

\section{Methodology supplement}
\paragraph{Notation.}
We follow the notations in the main paper.
Additionally,
we use $E_{ij}$ to denote a matrix which has $1$ at the $(i,j)$-th
entry and 0 elsewhere.

\subsection{An identifiable variable of CLG}
We first show the copula correlation parameter $\Sigma$ in Definition 1 in the main paper is not identifiable.
\begin{theorem}\label{theorem:corr_identifiability}
For a 2-dimensional categorical vector $(x_1,x_2)\sim \clg$, $\textup{CLG}(\Sigma^{(kl)},\bmu)$ has the same distribution as $\clg$ where $\Sigma^{(kl)}_{[1],[2]}=\Sigma_{[1],[2]} + c_1\sum_{m=1}^KE_{km} + c_2\sum_{m=1}^KE_{ml}$ for any constants $c_1, c_2$ and any integers $k,l=1,\ldots,K$.
In other words, adding any constant to a row or column in $\Sigma_{[1],[2]}$ does not change the distribution of $\clg$.
\end{theorem}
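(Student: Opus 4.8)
The plan is to reduce the claim to a one-line cancellation. Recall from Definition~1 that sampling from a CLG model first draws a latent Gaussian $Z=(Z^{(1)},Z^{(2)})\sim\mathcal N(\bmu,\Sigma)$ with blocks $Z^{(d)}\in\mathbb R^{K}$, and then obtains each $x_d$ from $Z^{(d)}$ by a rule that is unchanged when a common scalar is added to all $K$ coordinates of $Z^{(d)}$ — for the $\arg\max$ link $x_d=\arg\max_k Z^{(d)}_k$ this is immediate, and it holds verbatim for a softmax-type link as well. Hence $x_d$ is a function of the within-block pairwise differences $Z^{(d)}_i-Z^{(d)}_j$ alone, so the joint law of $(x_1,x_2)$ is a functional of the joint Gaussian law of the difference vector $\big((Z^{(1)}_i-Z^{(1)}_j)_{i,j},\,(Z^{(2)}_p-Z^{(2)}_q)_{p,q}\big)$ only.

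First I would record that this Gaussian difference law is determined by three pieces of data: the means $\mathbb E[Z^{(d)}_i-Z^{(d)}_j]=\mu^{(d)}_i-\mu^{(d)}_j$; the two within-block difference covariances, which are linear in $\Sigma_{[1],[1]}$ and $\Sigma_{[2],[2]}$ respectively; and the cross covariances $\mathrm{Cov}\big(Z^{(1)}_i-Z^{(1)}_j,\,Z^{(2)}_p-Z^{(2)}_q\big)=A_{ip}-A_{iq}-A_{jp}+A_{jq}$, where I abbreviate $A:=\Sigma_{[1],[2]}$. Going from $\Sigma$ to $\Sigma^{(kl)}$ alters neither $\bmu$ nor the diagonal blocks, so the first two pieces are untouched and only the cross covariances need to be examined.

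Then I would run the cancellation. With $A'_{ip}=A_{ip}+c_1\mathbf 1\{i=k\}+c_2\mathbf 1\{p=l\}$ denoting the modified cross block, the four-term combination $A'_{ip}-A'_{iq}-A'_{jp}+A'_{jq}$ acquires from the row shift the term $c_1\big(\mathbf 1\{i=k\}-\mathbf 1\{i=k\}-\mathbf 1\{j=k\}+\mathbf 1\{j=k\}\big)=0$ and, symmetrically, from the column shift the term $c_2\big(\mathbf 1\{p=l\}-\mathbf 1\{q=l\}-\mathbf 1\{p=l\}+\mathbf 1\{q=l\}\big)=0$; so every cross covariance is preserved for every choice of $i,j,p,q$. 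Consequently the difference vector has an identical Gaussian law under $(\Sigma,\bmu)$ and $(\Sigma^{(kl)},\bmu)$, and hence $(x_1,x_2)$ has the same distribution, which is the assertion.

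The arithmetic here is routine; the parts I expect to need genuine care are two framing issues. The first is the reduction step itself: one must verify from the precise link in Definition~1 that $x_d$ depends on $Z^{(d)}$ only through pairwise differences (equivalently, only through $Z^{(d)}$ modulo adding a constant vector), since the whole argument rests on this. The second is admissibility: for large $|c_1|,|c_2|$ the matrix $\Sigma^{(kl)}$ need not remain positive semidefinite, so strictly speaking I would either phrase the invariance for those $(c_1,c_2)$ keeping $\Sigma^{(kl)}$ a valid CLG parameter, or — since the CLG category probabilities are written in Definition~1 as a fixed analytic expression in the entries of $\Sigma$ — note that we have just shown that expression to be invariant under the row and column shifts, so the two distributions agree wherever both are defined, which yields the non-identifiability statement in full generality.
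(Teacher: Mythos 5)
Your proof is correct and follows essentially the same route as the paper: both arguments reduce the law of $(x_1,x_2)$ to the Gaussian law of within-block differences and then verify that the four-term cross-covariance combination $A_{ip}-A_{iq}-A_{jp}+A_{jq}$ telescopes away the row and column shifts (the paper packages the differences via matrices $\Delta_i$ rather than listing all pairs, but the cancellation is identical). Your closing remark about positive semidefiniteness of $\Sigma^{(kl)}$ for large $|c_1|,|c_2|$ is a legitimate caveat the paper leaves implicit, and your suggested fix (restrict to admissible $(c_1,c_2)$) is the right one.
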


\begin{proof}
For $(x_1,x_2)\sim \clg$, denote the probability that $x_1 = i$ and $x_2 = j$ by $P_{ij}(\Sigma_{[1],[2]}, \bmu)$ for $i,j=1,\ldots,K$.
To show $\clg$ and $\textup{CLG}(\Sigma^{(kl)},\bmu)$ have the same distribution, it suffices to show that $P_{ij}(\Sigma_{[1],[2]}, \bmu)=P_{ij}(\Sigma^{(kl)}_{[1],[2]}, \bmu)$ for any $i,j=1,\ldots,K$.

Now define $\Delta_i\in \Rbb^{K-1\times K}$ such that:
(1) the $i$-th column of $\Delta_i$ has all entries equal to $-1$; 
(2) the remaining $K-1$ columns of $\Delta_i$ excluding the $i$-th column is $\Irm_{K-1}$, the identity matrix.
Then, we find that 
\[
i = \argmax(\bz_{[1]}+\bmu_{[1]}) \iff \Delta_i(\bz_{[1]}+\bmu_{[1]})\leq \bo.
\]
Further, define $\Delta_{i,j}=\textup{diag}(\Delta_i, \Delta_j)$, then
\[
x_1=i, x_2=j \iff \Delta_{i,j}\bz+\Delta_{i,j}\bmu\leq \bo.
\]
Note $\Delta_{i,j}\bz$ is Gaussian distributed with mean zero and covariance matrix
\[
\left[\begin{array}{cc}
   \Delta_i \Delta_i^\top & \Delta_i \Sigma_{[1],[2]}\Delta_j^\top \\
   \Delta_j \Sigma_{[2],[1]}\Delta_i^\top  &  \Delta_j \Delta_j^\top
\end{array}\right],
\]
thus $P_{ij}(\Sigma_{[1],[2]}, \bmu)$ depends on $\Sigma_{[1],[2]}, \bmu$ only through $\Delta_i \Sigma_{[1],[2]}\Delta_j^\top$ and $\Delta_{i,j}\bmu$.
Now denote $R=\Sigma_{[1],[2]}$ and $R^{(kl)}=R+c_1\sum_{m=1}^KE_{km} + c_2\sum_{m=1}^KE_{ml}$ to simplify the notation.
Note the $(s,t)$-th entry of $\Delta_i R\Delta_j^\top$ is $r_{ij}+r_{st}-r_{it}-r_{sj}$ and the $(s,t)$-th entry of $R^{(kl)}$ is $r_{st} + c_1\mathds{1}(k=s) + c_2\mathds{1}(l=t)$,
then it is straightforward that $\Delta_i R\Delta_j^\top=\Delta_i R^{(kl)}\Delta_j^\top$ for arbitrary $c_1, c_2$ and any $k,l=1,\ldots,K$,
which finishes our proof.
\end{proof}

To eliminate the unidentifiability showed in \cref{theorem:corr_identifiability},
we provide a variant of CLG with additional constraints in the copula correlation matrix $\Sigma$. Concretely, for each categorical variable $x_j$, we select one dimension in $\bz_{[j]}$ to be the base dimension, which does not correlate with any other entry in $\bz$.
Without loss of generality, we can select the first dimension of $\bz_{[j]}$ for all $j$.
In other words, $\Sigma_{[j]_1, -[j]_1}=\bo$ for any categorical index $j$,
where $-[j]_1$ means all indices but $[j]_1$.
If $\Sigma$ satisfies this constraint, then $\Sigma^{kl}$ with $\Sigma^{(kl)}_{[1],[2]}=\Sigma_{[1],[2]} + c_1\sum_{m=1}^KE_{km} + c_2\sum_{m=1}^KE_{ml}$  for nonzero $c_1,c_2$ will not satisfy this constraint.
With this constraint, there are only $(K-1)^2$ free parameters in $\Sigma_{[j_1],[j_2]}$ for any two categorical variables $x_{j_1}$ and $x_{j_2}$.
However,
this variant is not permutation-invariant to the labeling of categorical categories.

There are two motivations for this variant.
First, as mentioned in Sec 2.1.2 of the main paper, 
to describe the joint distribution of two categorical variables (each has $K$ categories), 
$(K-1)^2$ free parameters are sufficient once given the marginal categorical distribution.
Second, for a categorical variable $x$ generated as the argmax of a $K$-dim latent Gaussian $\bz$, only the difference among the entries of $\bz$ matters for the distribution of $x$.
In fact, we can even fix the first dimension of $\bz$ to be  constant $0$ and Theorem 1 of the main paper still holds.

\subsection{Computing the expectation of latent Gaussian}
Here we show that computing $\Ebb[\bz\bz^\top|\xobs, \Sigma]$ reduces to compute $\Ebb[\zobsz|\xobs, \Sigma]$ and $\Covrm[\zobsz|\xobs, \Sigma]$.
By writing $\bz = (\zobsz, \zmisz)$, we first decompose the computation into three parts:
(1) $\Ebb[\zobsz\zobsz^\top|\xobs, \Sigma]$; (2) $\Ebb[\zobsz\zmisz^\top|\xobs, \Sigma]$; (3)   $\Ebb[\zmisz\zmisz^\top|\xobs, \Sigma]$,
and then show each of the three parts can reduce to the mean and covariance of $\zobsz|\xobs, \Sigma$.
For (1), it is trivial.
For (2) and (3),
the key technique  is the law of total expectation and that \[\zmisz|\zobsz \sim 	\mathcal{N}\left(\SigmaMOz\zobsz,\SigmaMMz-\SigmaMOz\SigmaOOinvz\SigmaOMz\right).\]
Thus 
\[
\Ebb[\zobsz\zmisz^\top|\xobs, \Sigma] = \Ebb_{\zobsz}[\zobsz\Ebb_{\zmisz|\zobsz}[\zmisz^\top|\xobs, \Sigma]],
\]
and 
\[
\Ebb[\zmisz\zmisz^\top|\xobs, \Sigma] = \Ebb_{\zobsz}[\Ebb_{\zmisz|\zobsz}[\zmisz\zmisz^\top|\xobs, \Sigma]].
\]
The remaining computation is straightforward.

\subsection{Proof of Theorem 1}
    \begin{proof}
    
    Denote $\Pbb(\argmax(\bz+\bmu)=k)=p_k(\bmu)$ for $k=1,...,K$ and $\bP(\bmu)=(p_1(\bmu), ..., p_K(\bmu))$. 
    Also define $\be_k$: $\be_k\in \Rbb^K$ has $1$ at coordinate $k$ and zero elsewhere.

    We prove the existence by contradiction. Suppose there is no satisfying $\bmu$. Let
    \begin{equation}
        \bmu^* = \argmin_{\bmu}f(\bmu), \mbox{ where }f(\bmu)=\sum_{k=1}^K|p_k(\bmu)-p_k|.
        \label{eq:obj}
    \end{equation}
    
        Define $I = \{i|p_i(\bmu^*)< p_i,i=1,\ldots,K\}$ and $I^c=\{1,\ldots,K\}-I$.
        If there is no satisfying $\bmu$, then both $I$ and $I^c$ are not empty.
    Pick an $i\in I$.
    {Since $p_i(\bmu^*+\lambda \be_i)$ is a continuous function w.r.t. $\lambda$ and $\lim_{\lambda \rightarrow \infty}p_i(\bmu^*+\lambda \be_i)=1$, }there exists a $\lambda_0>0$ such that $p_i(\bmu^*+\lambda_0 \be_i)=p_i$.
    {Note $p_k(\bmu^*+\lambda \be_i)$ is strictly  decreasing w.r.t. $\lambda$ for any $k\neq i$. }Thus for $p_k(\bmu^*+\lambda_0 \be_i)=p_k(\bmu^*)-\delta_k$, we have $\delta_k>0$ when $k\neq i$ and $\delta_i=p_i(\bmu^*)-p_i<0$.
    Now
        \begin{align*}
        \sum_{k=1}^K|p_k(\bmu^*+\lambda_0 \be_i)-p_k|&=|p_i(\bmu^*+\lambda_0 \be_i)-p_i|+ \sum_{k\neq i}|p_k(\bmu^*+\lambda_0 \be_i)-p_k|\\
        &=\sum_{k\in I-\{i\}}|p_k-p_k(\bmu^*)+\delta_k|+\sum_{k\in I^c}|p_k-p_k(\bmu^*)+\delta_k|\\
        &=\sum_{k\in I-\{i\}}(p_k-p_k(\bmu^*)+\delta_k)+\sum_{k\in I^c}|p_k-p_k(\bmu^*)+\delta_k|\\
        &\leq\sum_{k\in I-\{i\}}(p_k-p_k(\bmu^*)+\delta_k)+\sum_{k\in I^c}|p_k-p_k(\bmu^*)|+\sum_{k\in I^c}\delta_k\\
        &= \sum_{k\in I-\{i\}}|p_k-p_k(\bmu^*)|+\sum_{k\in I^c}
        |p_k-p_k(\bmu^*)|+\sum_{k\neq i}\delta_k \\
        &= \sum_{k\neq i}|p_k-p_k(\bmu^*)|+\sum_{k\neq i}\delta_k \\
        &=f(\bmu^*) - (p_i-p_i(\bmu^*))+\sum_{k\neq i}\delta_k\\
        &=f(\bmu^*) +\sum_{k=1}^K\delta_k=f(\bmu^*) 
    \end{align*}
    The equality only holds when for each $\in I^c$, $p_k-p_k(\bmu^*)\geq 0$, which further leads to $p_k=p_k(\bmu^*)$ by the definition of $I^c$.
    That conflicts with that $I$ is  nonempty. Thus we must have 
\[
    f(\bmu^*+\lambda_0 \be_i) < f(\bmu^*),
    \]
    which contradicts our assumption and completes our proof for existence.

    Now for uniqueness, assume there exists $\bmu\neq \tilde\bmu$ such that $\Prm(\bmu) = \Prm(\tilde\bmu)$.
    Define $I_<, I_=, I_>$ to be the set of entries that $\bmu$ is smaller, equal, and larger than $\tilde \bmu$, respectively.
    We want to show it must be the case that both $I_<$ and $I_>$ are empty, which contradicts the assumption.
    
    First, if $I_<$ is empty but $I_>$ is not, then for each $i\in I_>$, we define $\bmu_i\in \Rbb^K$ such that $\bmu_i$ agrees with $\bmu$ at all entries but $i$ and agrees with $\tilde \bmu$ at entry $i$.
    Since $p_i(\bmu)$ is strictly increasing w.r.t. $\mu_i$ when fixing $\bmu_{\{i\}^c}$, we know $p_i(\bmu_i)<p_i(\bmu)$.
    Further repeatedly switching one more entry of $\bmu_i$ in $I_>$ from $\bmu$ to $\tilde\bmu$ until $I_>$ is exhausted, we have $p_i(\tilde\bmu_i)<p_i(\bmu)$, which contradicts the assumption.
    Similarly we can show the contradiction if $I_<$ is not empty but $I_>$ is empty.
    
    Now consider the case that both $I_<$ and $I_>$ are not empty.
   Define $\bmu^*$ such that $\bmu^*$ agrees with $\bmu$ over $I_<$ and agrees with $\tilde\bmu$ over $I_>$. According to above, we know for each $i\in I_>$, $p_i(\bmu)>p_i(\bmu^*)>p_i(\tilde\bmu)$,
   which contradicts the assumption.
   Thus we complete our proof.
    \end{proof}

\section{Experiments supplement}
\subsection{Implementation details}
All our implementation codes
are provided in a Github repo\footnote{\scriptsize \url{https://github.com/yuxuanzhao2295/Mixed-categorical-ordered-imputation-extended-Gaussian-copula}}. 
All experiments use a laptop with a 3.1 GHz Intel Core i5 processor and 8 GB RAM. All algorithms are implemented using \textit{one core}, although some of them including our \texttt{EGC} support parallelism.

The algorithm of \texttt{EGC} consists of two parts: the marginal estimation and the correlation estimation. 
The marginal estimation requires a subroutine to iteratively solve nonlinear systems.
We find the available iterative root finding algorithms package in \proglang{R} such as the \texttt{rootSolve} do not achieve desired precision occasionally, 
while the \texttt{scipy.optimize.root(method=`hybr')} function in \proglang{Python} finds accurate solution in all of our experiments.
Through our experiments,  \texttt{EGC} uses a \proglang{Python} implementation to estimate the marginal and a \texttt{R} implementation to estimate the copula correlation. A complete \texttt{R} implementation is available though, and a complete \texttt{Python} implementation will be available soon.
All other algorithms are completely implemented in \proglang{R}. 

\texttt{MICE} is a multiple imputation method.
To derive a single imputation from \texttt{MICE},
we pool $5$ imputed datasets (majority vote for categorical and mean for ordered).
We choose the rank for \texttt{imputeFAMD} in the grid of $\{1,3,5,7,9\}$.
We choose the regularization parameter for \texttt{softImpute} in an exponentially decaying path of length $10$ from $0.1\times \lambda_0$ and $0.99\times \lambda_0$, where $\lambda_0$ is computed using the provided function \texttt{lambda0()} in the package \texttt{softImpute}.
\begin{verbatim}
    exp(seq(from=log(lam0*0.99),to=log(lam0*0.1),length=10))
\end{verbatim}

The runtime comparison among algorithms are reported in \cref{table:runtime}.
For \texttt{imputeFAMD} and \texttt{softImpute}, the reported time is the total runtime under all searched hyperparameters.

\begin{table}
  \caption{Algorithm runtime in seconds: mean (sd) over $10$ repetitions. The synthetic dataset is under the setting $K=6, p_\cat=5$.}
  \label{table:runtime}
  \centering
  \begin{tabular}{cccccc}
    \toprule
         & EGC     & missForest & MICE & ImputeFAMD & softImpute\\
    \midrule
    Synthetic  & 22.0 (1.0) & 58.7 (9.9) & 82.3 (1.0) & 56.0 (22.3) & 1.4 (0.2)    \\
    Abalone  &   9.8 (0.1) & 136.2 (37.2) & 2.9 (0.1) & 46.8 (5.4) & 0.9 (0.1)  \\
    Heart   &  2.1(0.8) & 2.6 (0.8) & 4.3 (0.3) & 11.7 (3.7) & 0.1 (0.0) \\
    CMC & 5.5 (1.1) & 9.2 (1.4) & 17.3 (0.9) & 41.2 (16.2) & 0.3 (0.0)\\
    Creditg & 16.0 (0.9) & 43.5 (12.3) & 74.4 (1.2) & 38.8 (11.3) & 1.0 (0.1)\\
    Credita & 10.4 (1.1) & 15.6 (3.2) & 30.0 (8.6) & 23.8 (12.1) & 0.6 (0.1)\\
    Colic & 3.5 (0.2) & 6.9 (2.0) & 40.0 (15.3) & 18.2 (15.8) & 0.4 (0.1)\\
    \bottomrule
  \end{tabular}
\end{table}

\subsection{MAR and MNAR mechanism}
We conduct additional experiments under a MAR mechanism and a MNAR mechanism to test the sensitivity of implemented imputation methods.

For MNAR, we use a self-masking mechanism which assigns samples different missing probability by their own values for each variable.
Concretely, 
suppose we want to mask $\alpha$ percentage entries as missing,
then for each variable, we assign a high missing probability ($\alpha+10\%$) for samples below the first third quantile, 
a medium missing probability ($\alpha$) for samples between the first third and the second third quantile,
and a low missing probability ($\alpha-10\%$) for samples above the second third quantile.

For MAR, we first randomly select $1/3$ of variables as observed.
Then for each of the remaining $2/3$ of variables,
its samples receive three different missing probability similar to the MNAR mechanism but based a randomly selected observed variable instead of its own values.
To have $\alpha$ missing ratio and compensate that only $2/3$ of variables may be masked as missing, we use $\frac{3\alpha}{2}$ as the normal missing probability, 
$\frac{3\alpha}{2}+10\%$ as the high missing probability, 
and $\frac{3\alpha}{2}-10\%$ as the low missing probability.


\subsection{Synthetic experiments supplement}
Fig 1 of the main paper reports the results for categorical variables with six categories under MCAR of $30\%$ missing ratio.
In \cref{fig:sim_K3} and \cref{fig:sim_K9}, we report the results for different number of categories: three and nine.
In \cref{fig:sim_m20} and \cref{fig:sim_m40}, we report the results for different missing ratios under MCAR.
In \cref{fig:sim_mar} and \cref{fig:sim_mnar}, we report the results  under different missing mechanism (MAR and MNAR).
 In general,
these experiments show EGC performs well for both categorical and ordered variables in mixed data
as reported in Section 3.1 of the main paper.

 \begin{figure}
 	\centering
 	\includegraphics[width=\linewidth]{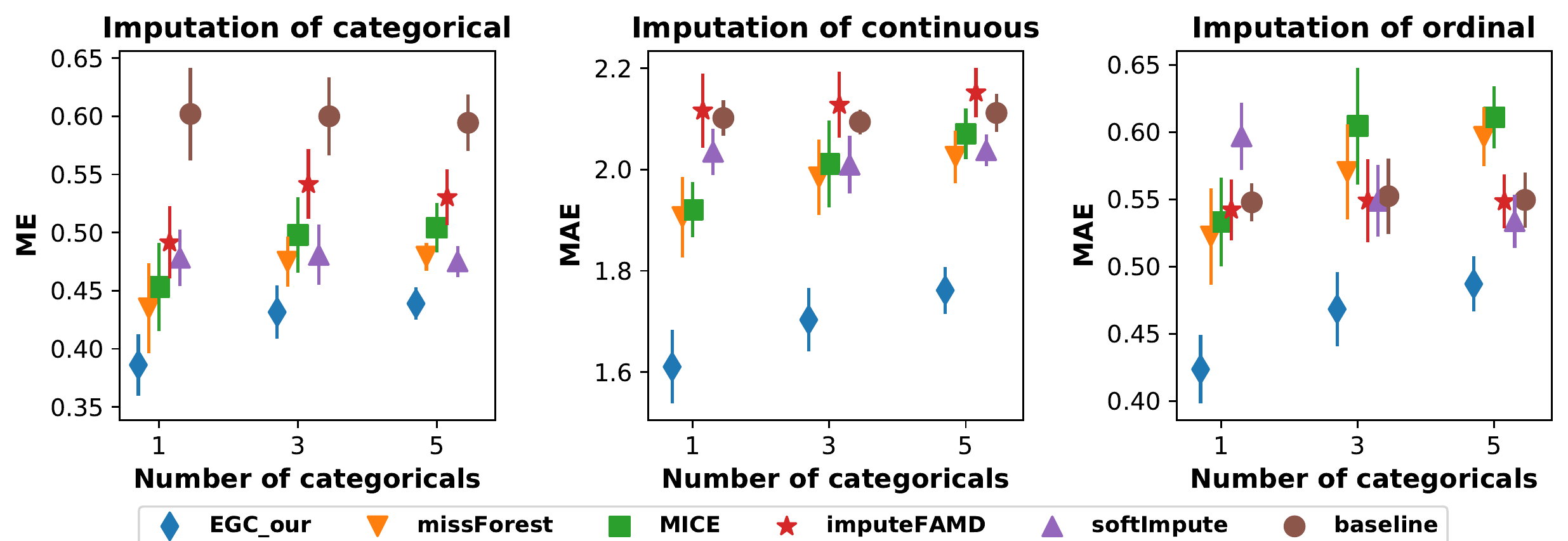}
 	\caption{Imputation error on synthetic mixed data  under \textbf{MCAR of $30\%$ missing}. There are $5$ continuous variables, $5$ ordinal variables and  $1/3/5$ categorical variables with  \textbf{three} categories, reported over $10$ repetitions (error bars indicate standard deviation).}
 	\label{fig:sim_K3}
 \end{figure}
 
  \begin{figure}
 	\centering
 	\includegraphics[width=\linewidth]{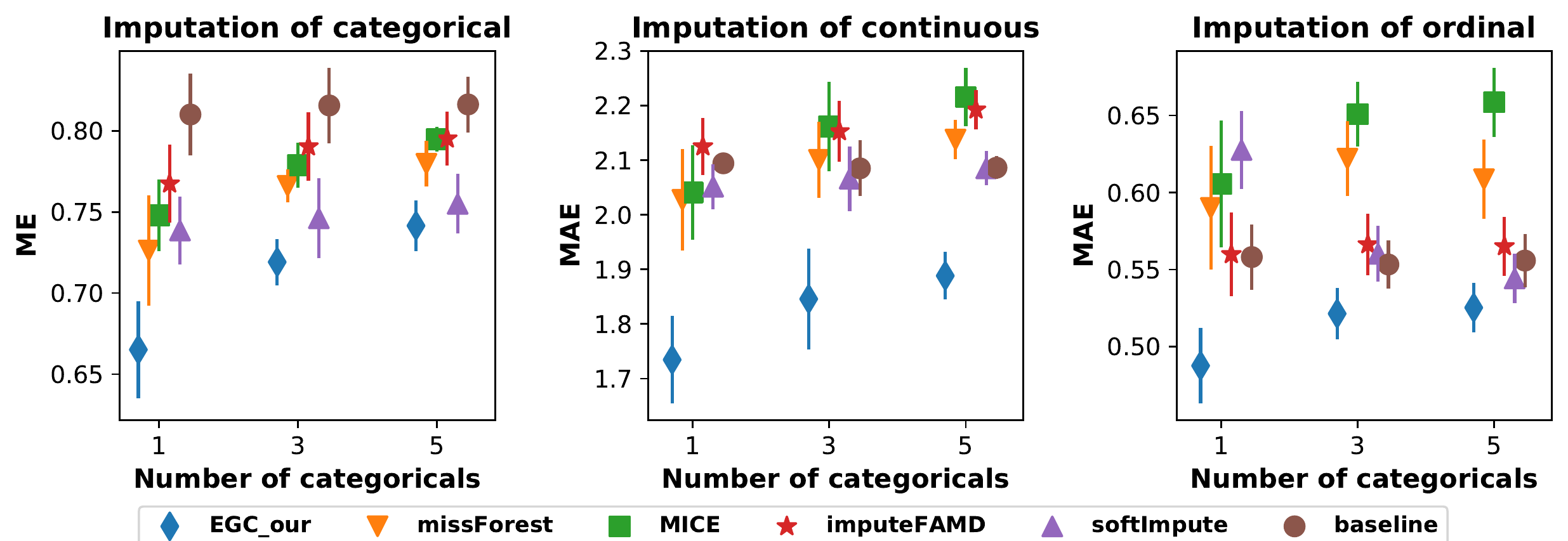}
 	\caption{Imputation error on synthetic mixed data  under \textbf{MCAR of $30\%$ missing}. There are $5$ continuous variables, $5$ ordinal variables and  $1/3/5$ categorical variables with \textbf{nine} categories, reported over $10$ repetitions (error bars indicate standard deviation).}
 	\label{fig:sim_K9}
 \end{figure}
 
  \begin{figure}
 	\centering
 	\includegraphics[width=\linewidth]{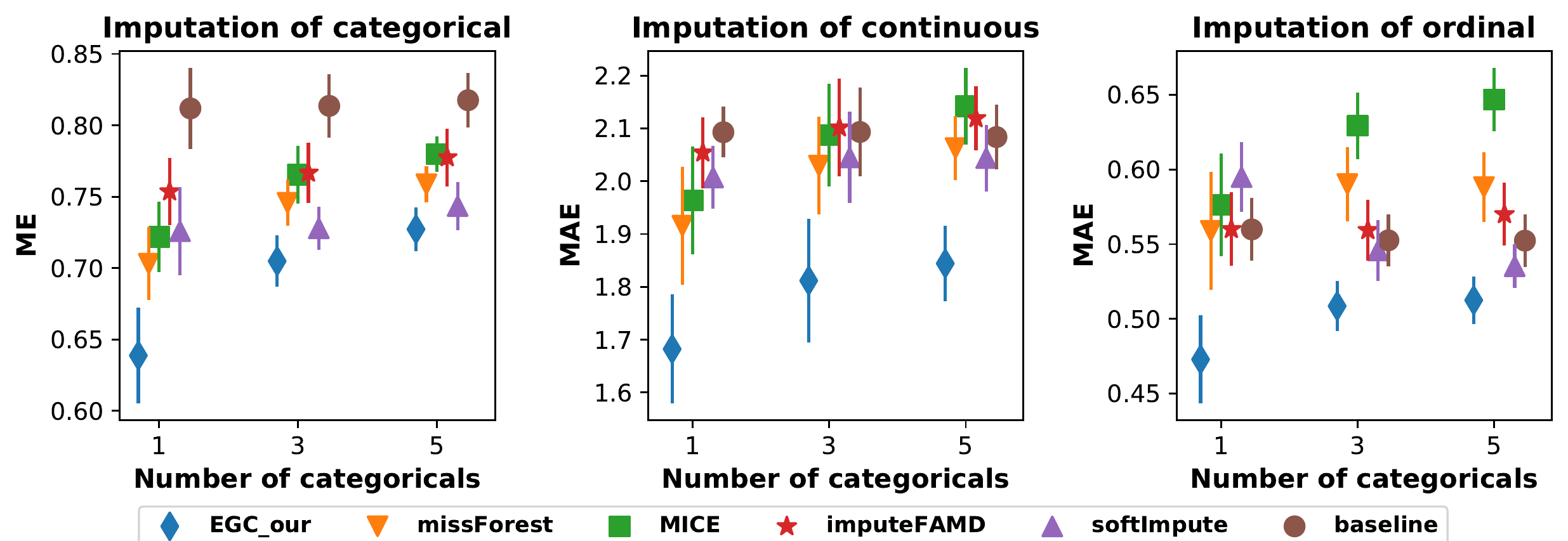}
 	\caption{Imputation error on synthetic mixed data  under \textbf{MCAR of $20\%$ missing}. There are $5$ continuous variables, $5$ ordinal variables and  $1/3/5$ categorical variables with \textbf{six} categories, reported over $10$ repetitions (error bars indicate standard deviation).}
 	\label{fig:sim_m20}
 \end{figure}
 
   \begin{figure}
 	\centering
 	\includegraphics[width=\linewidth]{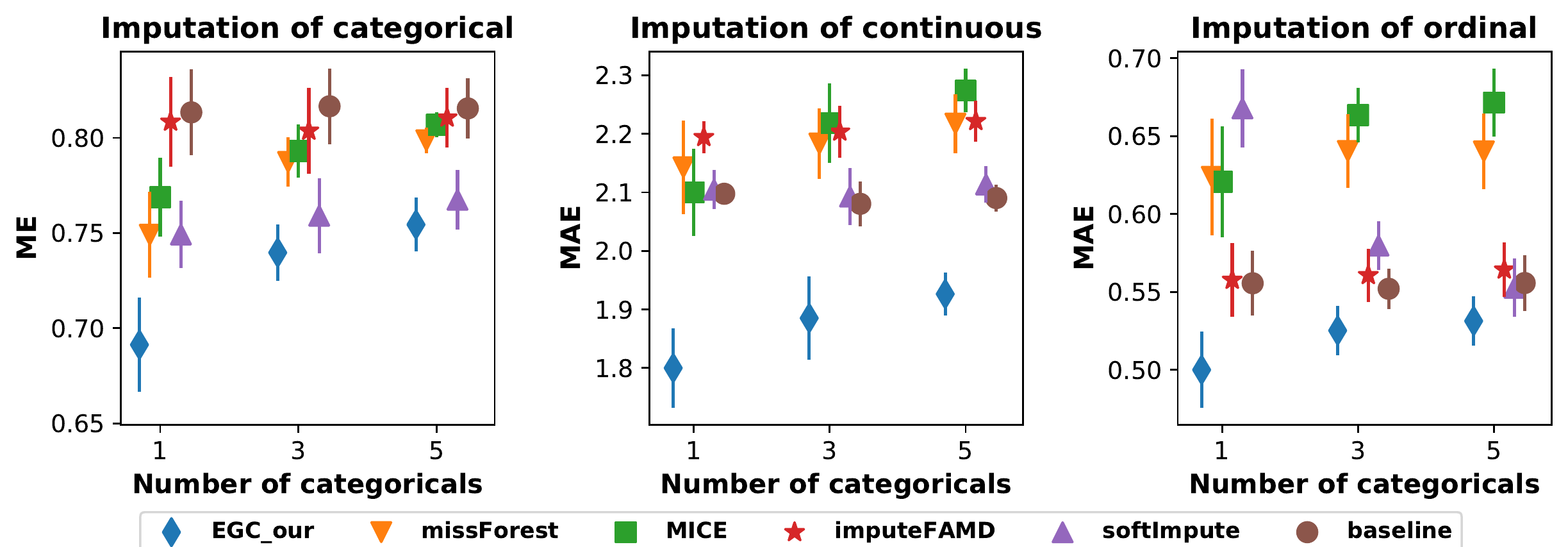}
 	\caption{Imputation error on synthetic mixed data  under \textbf{MCAR of $40\%$ missing}. There are $5$ continuous variables, $5$ ordinal variables and  $1/3/5$ categorical variables with \textbf{six} categories, reported over $10$ repetitions (error bars indicate standard deviation).}
 	\label{fig:sim_m40}
 \end{figure}
 
    \begin{figure}
 	\centering
 	\includegraphics[width=\linewidth]{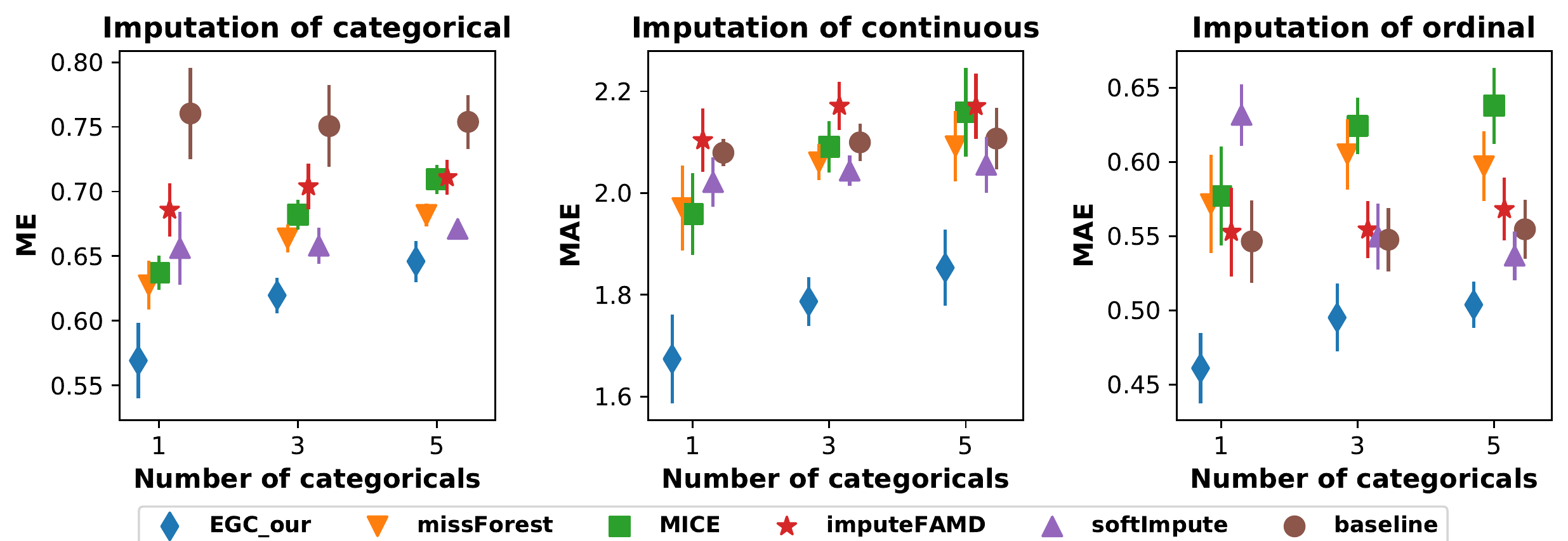}
 	\caption{Imputation error on synthetic mixed data  under \textbf{MAR of $30\%$ missing}. There are $5$ continuous variables, $5$ ordinal variables and  $1/3/5$ categorical variables with \textbf{six} categories, reported over $10$ repetitions (error bars indicate standard deviation).}
 	\label{fig:sim_mar}
 \end{figure}
 
    \begin{figure}
 	\centering
 	\includegraphics[width=\linewidth]{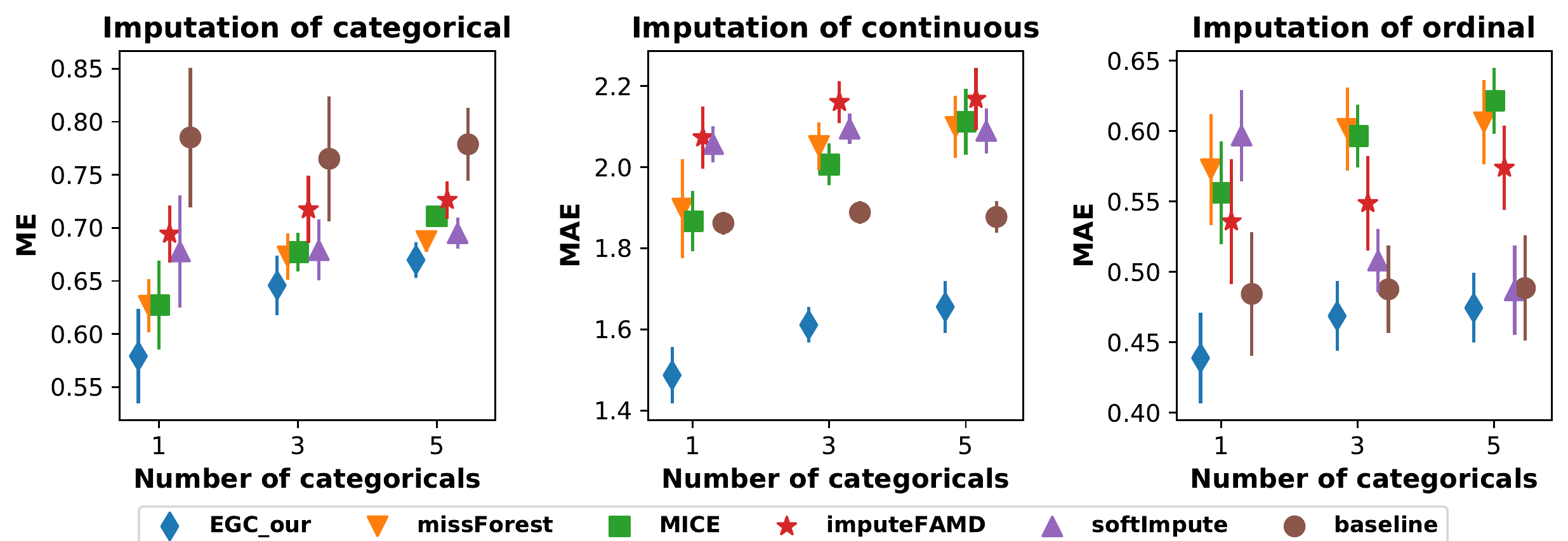}
 	\caption{Imputation error on synthetic mixed data  under \textbf{MNAR of $30\%$ missing}. There are $5$ continuous variables, $5$ ordinal variables and  $1/3/5$ categorical variables with \textbf{six} categories, reported over $10$ repetitions (error bars indicate standard deviation).}
 	\label{fig:sim_mnar}
 \end{figure}

\subsection{Real data experiments supplement}
All used datasets are accessed from \url{openml.org} through the \proglang{R} package \texttt{OpenML}.
\cref{table:runtime} provides an overview of the used datasets.
The prepared dataset does not distinguish categorical and ordinal variables.
We do distinguish them according to the variable description whenever available.
Some features are removed because their distribution are highly concentrated at a single value (more than $.95\%$).
All preprocessing information are provided in the codes.

\begin{table}
  \caption{Used UCI dataset overview.}
  \label{table:realdata}
  \centering
  \begin{tabular}{ccccc}
    \toprule
         & OpenML ID     & $n$ & $p_\cat$ & $p_\ord$\\
    \midrule
    Abalone   & 1557 & 4177 & 1 & 7    \\
    Heart   & 53 & 270 & 3 & 10 \\
    CMC & 23 & 1473 & 1 & 8\\
    Creditg & 31 & 1000 & 8 & 12 \\
    Credita & 29 & 690 & 4 & 10\\
    Colic & 25 & 368 & 4 & 19\\
    \bottomrule
  \end{tabular}
\end{table}

Fig 1 of the main paper reports the results  under MCAR of $20\%$ missing ratio.
In \cref{fig:uci_m10} and \cref{fig:uci_m30}, we report the results for different missing ratios under MCAR.
In \cref{fig:uci_mar} and \cref{fig:uci_mnar}, we report the results  under different missing mechanism (MAR and MNAR).
 In general,
these experiments show EGC performs well for both categorical and ordered variables in mixed data
as reported in Section 3.2 of the main paper.

 \begin{figure}
 	\centering
 	\includegraphics[width=\linewidth]{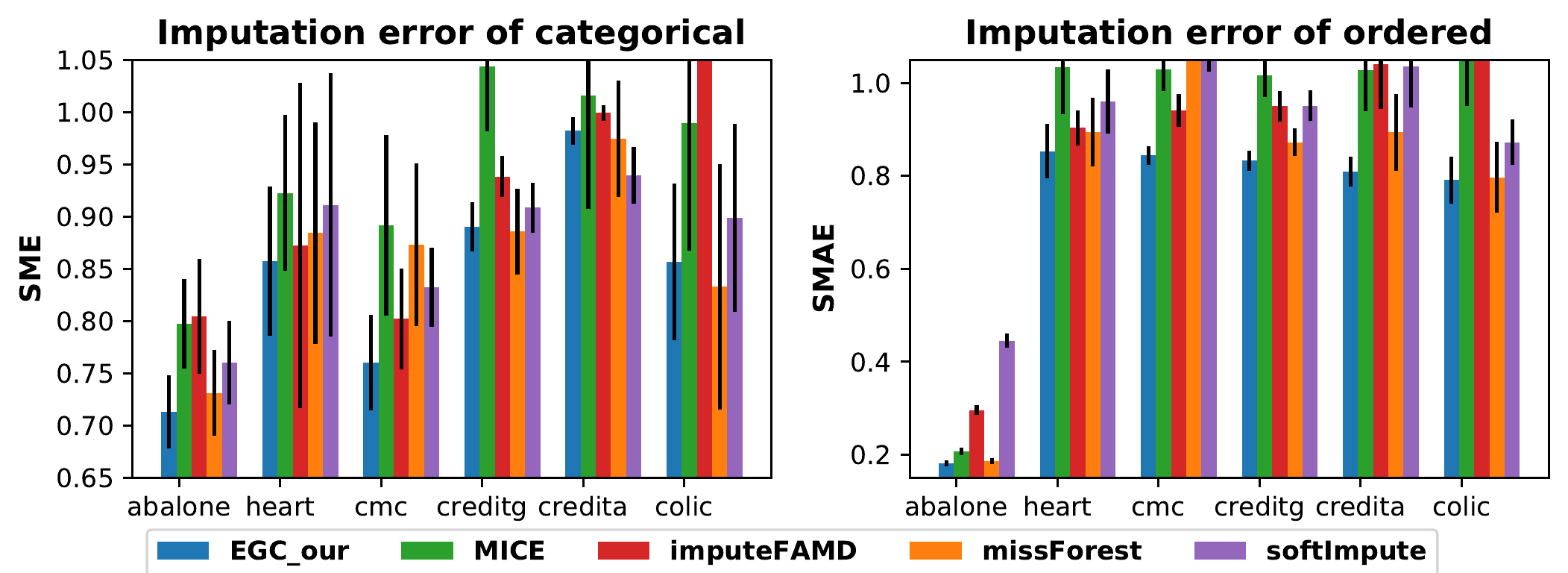}
 	\caption{Imputation error of categorical variables and of ordered variables, i.e., ordinal and continuous, on 6 UCI datasets 
 	\textbf{under MCAR of $10\%$ missingness}. Results shown as mean $\pm$ standard deviation. }
 	\label{fig:uci_m10}
 \end{figure}

  \begin{figure}
 	\centering
 	\includegraphics[width=\linewidth]{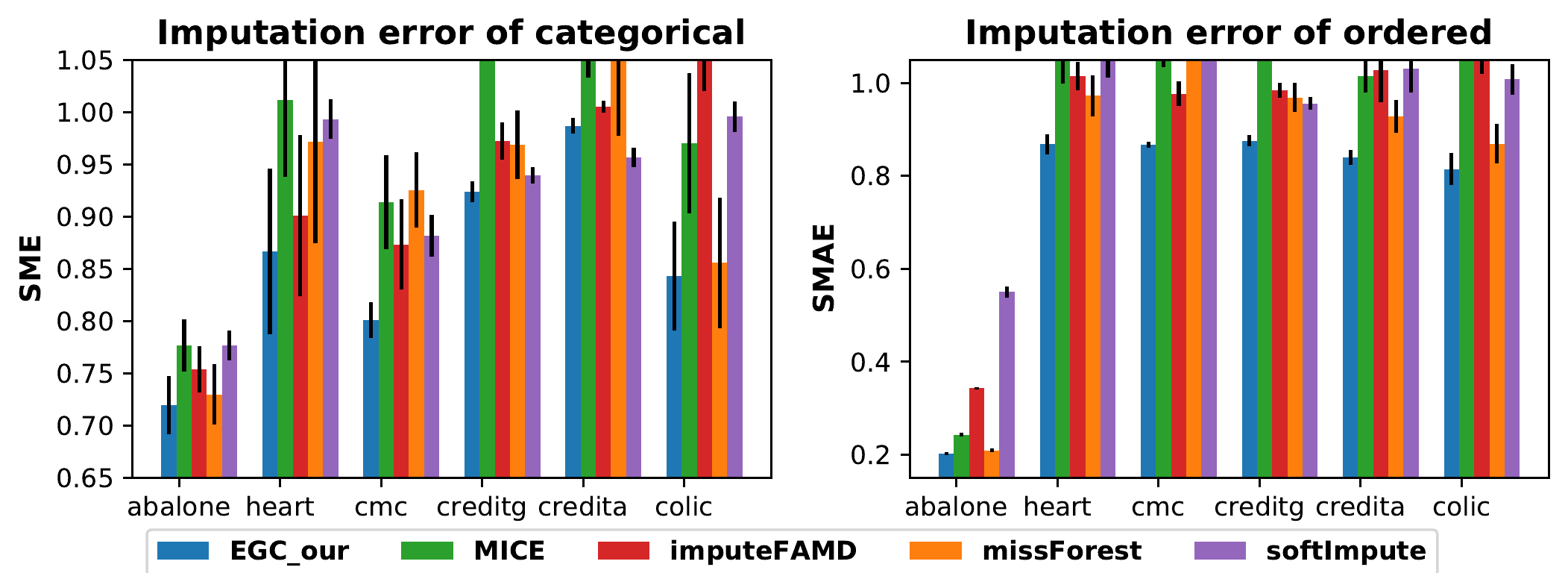}
 	\caption{Imputation error of categorical variables and of ordered variables, i.e., ordinal and continuous, on 6 UCI datasets 
 	\textbf{under MCAR $30\%$ missingness}. Results shown as mean $\pm$ standard deviation. }
 	\label{fig:uci_m30}
 \end{figure}
 
   \begin{figure}
 	\centering
 	\includegraphics[width=\linewidth]{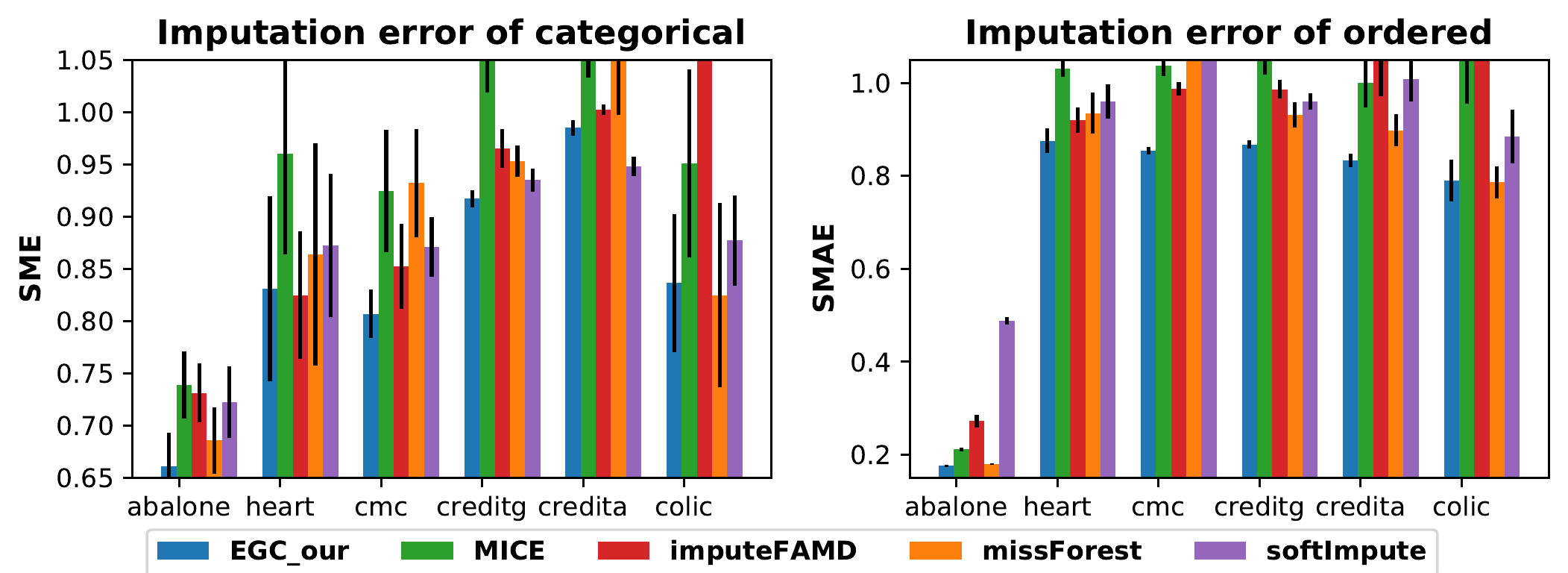}
 	\caption{Imputation error of categorical variables and of ordered variables, i.e., ordinal and continuous, on 6 UCI datasets 
 	\textbf{under MAR $20\%$ missingness}. Results shown as mean $\pm$ standard deviation. }
 	\label{fig:uci_mar}
 \end{figure}
 
   \begin{figure}
 	\centering
 	\includegraphics[width=\linewidth]{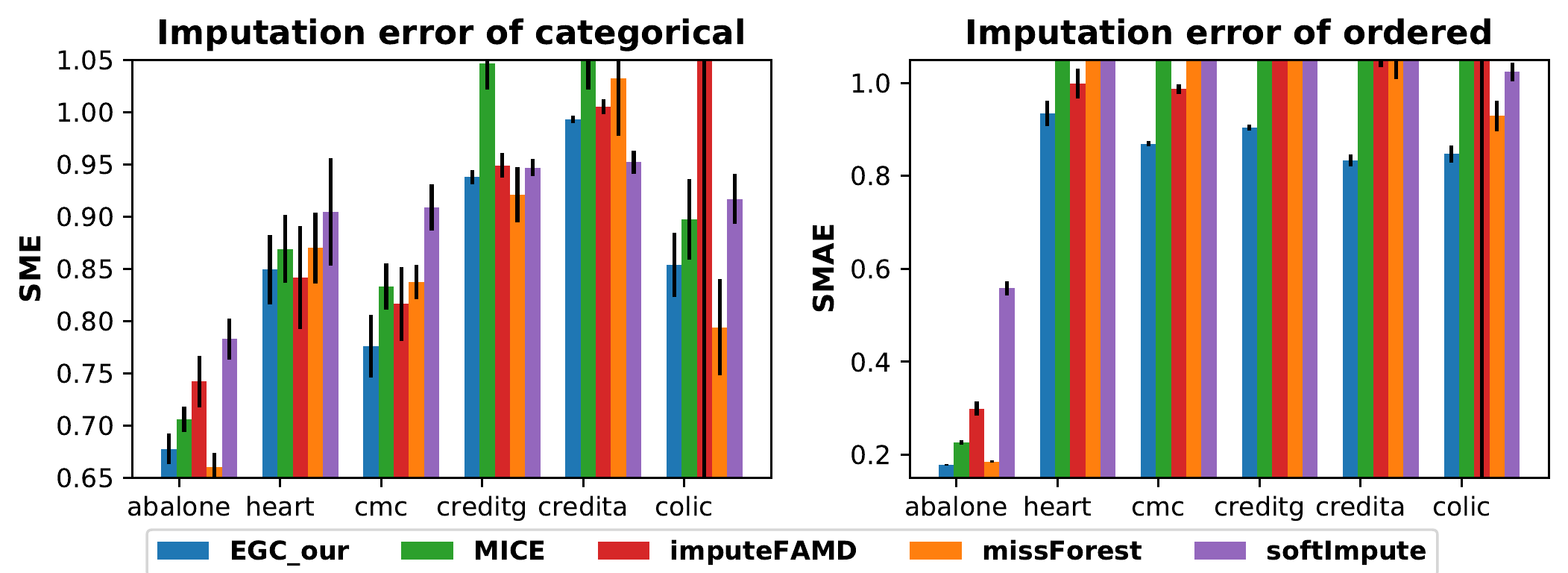}
 	\caption{Imputation error of categorical variables and of ordered variables, i.e., ordinal and continuous, on 6 UCI datasets 
 	\textbf{under MNAR $20\%$ missingness}. Results shown as mean $\pm$ standard deviation. }
 	\label{fig:uci_mnar}
 \end{figure}
 





\end{document}